\documentclass[conference]{IEEEtran}
\IEEEoverridecommandlockouts

\usepackage{cite}
\usepackage{amsmath,amssymb,amsfonts}
\usepackage{algorithmic}
\usepackage{graphicx}
\usepackage{textcomp}
\usepackage{xcolor}

\usepackage[mathscr]{eucal}
\usepackage{epsfig,epsf,psfrag}
\usepackage{amssymb,amsthm,amsfonts,latexsym}
\usepackage{xcolor,url,overpic}
\usepackage{caption}
\usepackage{subcaption}
\usepackage{array}
\usepackage{verbatim}
\usepackage{bm}
\usepackage{bbm}
\usepackage{dsfont}
\usepackage{algorithmic}
\usepackage{algorithm}
\usepackage{verbatim}
\usepackage{textcomp}
\usepackage{mathrsfs}
\usepackage{epstopdf}
\usepackage{booktabs} 
\usepackage{thm-restate}
\usepackage{tikz}

\newcommand{\openone}{\leavevmode\hbox{\small1\normalsize\kern-.33em1}}

\catcode`~=11 \def\UrlSpecials{\do\~{\kern -.15em\lower .7ex\hbox{~}\kern .04em}} \catcode`~=13 

\allowdisplaybreaks[1]

\newcommand{\calF}{\mathcal{F}}

\newcommand{\calP}{\mathcal{P}}

\newcommand{\calX}{\mathcal{X}}
\newcommand{\calY}{\mathcal{Y}}
\newcommand{\calZ}{\mathcal{Z}}

\newcommand{\bbE}{\mathbb{E}}

\newcommand{\bbP}{\mathbb{P}}

\newcommand{\bbR}{\mathbb{R}}

\DeclareMathAlphabet{\mathbsf}{OT1}{cmss}{bx}{n}
\DeclareMathAlphabet{\mathssf}{OT1}{cmss}{m}{sl}

\DeclareSymbolFont{bsfletters}{OT1}{cmss}{bx}{n}  
\DeclareSymbolFont{ssfletters}{OT1}{cmss}{m}{n}
\DeclareMathSymbol{\bsfGamma}{0}{bsfletters}{'000}
\DeclareMathSymbol{\ssfGamma}{0}{ssfletters}{'000}
\DeclareMathSymbol{\bsfDelta}{0}{bsfletters}{'001}
\DeclareMathSymbol{\ssfDelta}{0}{ssfletters}{'001}
\DeclareMathSymbol{\bsfTheta}{0}{bsfletters}{'002}
\DeclareMathSymbol{\ssfTheta}{0}{ssfletters}{'002}
\DeclareMathSymbol{\bsfLambda}{0}{bsfletters}{'003}
\DeclareMathSymbol{\ssfLambda}{0}{ssfletters}{'003}
\DeclareMathSymbol{\bsfXi}{0}{bsfletters}{'004}
\DeclareMathSymbol{\ssfXi}{0}{ssfletters}{'004}
\DeclareMathSymbol{\bsfPi}{0}{bsfletters}{'005}
\DeclareMathSymbol{\ssfPi}{0}{ssfletters}{'005}
\DeclareMathSymbol{\bsfSigma}{0}{bsfletters}{'006}
\DeclareMathSymbol{\ssfSigma}{0}{ssfletters}{'006}
\DeclareMathSymbol{\bsfUpsilon}{0}{bsfletters}{'007}
\DeclareMathSymbol{\ssfUpsilon}{0}{ssfletters}{'007}
\DeclareMathSymbol{\bsfPhi}{0}{bsfletters}{'010}
\DeclareMathSymbol{\ssfPhi}{0}{ssfletters}{'010}
\DeclareMathSymbol{\bsfPsi}{0}{bsfletters}{'011}
\DeclareMathSymbol{\ssfPsi}{0}{ssfletters}{'011}
\DeclareMathSymbol{\bsfOmega}{0}{bsfletters}{'012}
\DeclareMathSymbol{\ssfOmega}{0}{ssfletters}{'012}

\newcommand{\hatD}{\hat{D}}

\newcommand{\hatY}{\hat{Y}}

\usepackage{thmtools, thm-restate}
\newtheorem{theorem}{Theorem} 
\newtheorem{lemma}[theorem]{Lemma}

\newtheorem{proposition}[theorem]{Proposition}
\newtheorem{corollary}[theorem]{Corollary}
\newtheorem{definition}{Definition}

\newtheorem{remark}{Remark}

\newcommand{\E}{\mathbb E}
\newcommand{\KL}{\mathrm{KL}}
\newcommand{\op}{\mathrm{op}}
\newcommand{\gen}{\mathrm{gen}}
\usepackage{mathtools}
\mathtoolsset{showonlyrefs}

\def\BibTeX{{\rm B\kern-.05em{\sc i\kern-.025em b}\kern-.08em
    T\kern-.1667em\lower.7ex\hbox{E}\kern-.125emX}}
\begin{document}

\title{On the Generalization of Knowledge Distillation: An Information-Theoretic View}

\author{
\IEEEauthorblockN{Bingying Li\qquad  Haiyun He}
\IEEEauthorblockA{\textit{Internet of Things Thrust, Information Hub}\\
\textit{The Hong Kong University of Science and Technology (Guangzhou)}\\
Guangzhou, China\\
Emails: bli618@connect.hkust-gz.edu.cn \quad  haiyunhe@hkust-gz.edu.cn}
}

\maketitle

\begin{abstract}
Knowledge distillation is widely used to improve generalization in practice, yet its theoretical understanding remains elusive. In the standard distillation setting, a teacher model provides soft predictions to guide the training of a student model. We model teacher  and student  training as coupled stochastic processes and introduce a distillation divergence, defined as the Kullback–Leibler divergence between these two stochastic kernels. 
Within this framework, we derive two generalization bounds for the student model relative to the teacher's generalization gap: an upper bound under a sub-Gaussian assumption via algorithmic stability, and a lower bound under a central condition with sharper dependence on the distillation divergence. We further develop a loss-sharpness-aware bound with an explicit tightness regime, showing that the teacher's local flatness can strictly tighten the bound. Additionally, in a linear Gaussian case study, the distillation divergence admits an interpretable decomposition into bias, variance, and rank-bottleneck costs, yielding practical guidance for distillation design.
\end{abstract}

\begin{IEEEkeywords}
Knowledge Distillation, Generalization Bounds, Algorithm Stability, Sharpness, Stochastic Processes
\end{IEEEkeywords}

\section{Introduction}
Knowledge distillation (KD) \cite{hinton2015distilling} has grown from a model-compression heuristic into a widely adopted approach for improving the generalization performance of neural networks. KD refers to a learning paradigm in which a \emph{teacher} model, typically trained via standard supervised learning, provides auxiliary supervision to guide the training of a \emph{student} model. 
The goal is for the student to leverage information encoded in the teacher’s outputs to achieve improved generalization, even when the student model is more constrained in capacity. Despite extensive empirical success\cite{zhang2017deepmutuallearning,you2017learning}, establishing a rigorous theory that explains when and why distillation helps remains an active line of research. In simplified regimes, including linear and deep linear models, existing analyses show that KD can achieve faster excess-risk decay than standard supervised learning under favorable data geometry and optimization bias, in some cases improving the rate from $O(n^{-1/2})$ to $O(n^{-1})$ \cite{phuong2019towards,ba2014deepnetsreallyneed}. These results connect naturally to Learning Using Privileged Information (LUPI), where the teacher’s soft outputs can be viewed as a proxy for privileged signals that effectively simplify the student’s decision boundary and improve sample efficiency \cite{lopezpaz2016unifyingdistillationprivilegedinformation,JMLR:v16:vapnik15b,hsu2021generalizationboundsdistillation}. Complementary statistical viewpoints model the teacher as an approximation to a Bayes estimator and interpret KD as a form of variance reduction: soft targets smooth gradient estimates relative to hard labels, mitigating stochastic gradient noise and improving the effective bias-variance tradeoff\cite{pmlr-v139-menon21a,safaryan2023knowledgedistillationperformspartial}. More recent work further suggests that, even with imperfect teachers, distillation can behave like learned label smoothing, reducing overconfidence and improving calibration \cite{yuan2021revisitingknowledgedistillationlabel, ji2020knowledgedistillationwideneural}.

Meanwhile, a growing literature relates KD generalization to both loss-landscape geometry and information-theoretic quantities. Empirical and theoretical evidence indicates that distillation can bias optimization toward flatter solutions, often characterized by smaller curvature-related measures such as Hessian trace, which are frequently associated with improved robustness and generalization \cite{pham2022revisitingselfdistillation, 10.1162/neco.1997.9.1.1,zhang2023generalizationmatterslossminima}. This geometric effect is closely aligned with sharpness-aware training principles, including sharpness-aware minimization, which explicitly seeks parameters that are insensitive to small weight perturbations\cite{foret2021sharpnessawareminimizationefficientlyimproving,peng2026leveragingflatnessimproveinformationtheoretic}. In parallel, information-theoretic frameworks analyze KD through representation-level objectives motivated by the information bottleneck principle\cite{tishby2000informationbottleneckmethod,wang2022efficientknowledgedistillationmodel} and variational mutual-information surrogates, aiming to preserve high-entropy “dark knowledge” about inter-class structure\cite{ahn2019variationalinformationdistillationknowledge, tian2022contrastiverepresentationdistillation, ye2024bayesconditionaldistributionestimation}. Despite these advances, existing accounts remain fragmented and still fall short of a unified generalization guarantee that jointly models teacher and student as coupled stochastic training processes, quantifies their mismatch via a distillation divergence, and leverages local sharpness of the attained minima to tighten the bound.

This work quantifies the generalization limits of knowledge distillation by modeling the teacher and student pipelines as two stochastic processes. We introduce a distillation Kullback–Leibler divergence ($\mathsf{K}_n$) and show that it can be decomposed into a dataset-shift term, induced by the teacher-generated pseudo-data, and an algorithm-shift term, induced by the mismatch between teacher and student training kernels. 
Building on this framework, we derive two complementary generalization statements: an upper bound on the student's expected generalization gap under a sub-Gaussian teacher gap, justified via the teacher's algorithm stability, and a lower bound under the ($\eta,c$)-central condition. Together, these inequalities formalize a divergence-deviation tradeoff: when the student process stays close to the teacher process in $\mathsf{K}_n$, the student’s expected generalization gap is constrained to remain teacher-like up to explicit KL-dependent penalties. On the other hand, the lower bound implies that when the teacher generalization gap is large, achieving a substantially smaller student generalization gap requires paying a larger distillation divergence.

Our second contribution incorporates local loss-landscape geometry into the distillation analysis. We derive a sharpness-aware generalization bound whose dominant penalty is controlled by the teacher’s local empirical and population sharpness, quantified via randomized parameter perturbations. We further characterize a tightness radius ($\rho_0$) and identify a regime in which the teacher’s local flatness yields a provably tighter bound than counterparts based on global complexity proxies. Finally, we instantiate the framework in a linear Gaussian example, where $\mathsf{K}_n$ admits an explicit decomposition into teacher bias and variance, a rank-bottleneck approximation cost, and a covariance-mismatch term, yielding a concrete checklist for designing effective distillation protocols.

\section{Preliminaries and Problem Formulation}
\label{sec:formulation}

In this section, we formalize the generalization error of knowledge distillation using an information-theoretic framework. We define the risk functions and model the distillation procedure as a stochastic process governed by conditional distributions.

\subsection{Notation}
The class of Borel probability measures on a set $\calX$ is denoted by $\calP(\calX)$. A random variable $X\sim P_X\in\calP(\calX)$ is called $\sigma^2$-sub-Gaussian, if $\bbE\big[\exp\big(\lambda(X-\bbE[X])\big)\big]\leq \lambda^2\sigma^2/2$ for any $\lambda\in\bbR$. The Kullback-Leibler (KL) divergence between $\mu,\nu\in\calP(\calX)$ ($\mu\ll\nu$) is defined by $\KL(\mu\|\nu)\coloneqq \int   \frac{d  \mu}{d  \nu} \log \frac{d  \mu}{d  \nu} \,  d \nu$. Let $I_d$ denote the $d\times d$ identity matrix.

\subsection{Risk and Generalization Gap Definitions}
\label{subsec:risk_defs}

Consider a data space $\mathcal{Z} := \mathcal{X} \times \mathcal{Y}$, where $\calX$ denotes the feature space and $\calY$ denotes the label set. We assume that the real dataset $D_n = \{Z_i\}_{i=1}^n$ comprises $n$ independently and identically distributed (i.i.d.) samples, where $Z_i=(X_i,Y_i)\sim P\in\calP(\calZ)$ for all $i\in[n]$. Note that the joint distribution of $D_n$ is thus given by $\Delta_{D_n}=P^{\otimes n}$.

We model knowledge distillation as a stochastic process that maps real data to a teacher model and subsequently to a student model. Specifically, we study two learning processes: standard supervised training (the teacher process) and training via knowledge distillation (the student process).
\paragraph{Teacher Model Training Process} Let $\calF\subseteq \{f:\mathcal{X}\to\mathcal{Y}\}$ denote a class of measurable predictors.  We consider a teacher hypothesis class $\calF_T\subset \calF$ consisting of predictive teacher models $f_T:\calX\to\calY$.
A teacher model $f_T\in\calF_T$ is trained on the real dataset $D_n$ via a learning algorithm, which can be characterized by a stochastic kernel $\Delta_{f_T|D_n}\in\calP(\calF_T|\calZ^n)$.
Let $\ell: \mathcal{F} \times \mathcal{Z} \to \mathbb{R}_{\ge 0}$ be a nonnegative loss function. Given any  $f_T\in\calF_T$, the \emph{empirical risk} on the training dataset $D_n$ and the \emph{population risk} with respect to the underlying data distribution $P$ are respectively defined as: 
\begin{equation}
    L_{D_n}(f_T) \coloneqq \frac{1}{n} \sum_{i=1}^n \ell(f_T, Z_i),~
    L_P(f_T) \coloneqq \mathbb{E}_{Z \sim P}[\ell(f_T, Z)].
\end{equation}
To investigate the effects of training process and data distribution on the teacher model generalization capability, we analyze the \emph{expected generalization gap}, defined as the expected gap between the the population and empirical risks:
\begin{equation}
    \gen_T\coloneqq\mathbb{E}_{\Delta_{D_n, f_T}}[L_P(f_T) - L_{D_n}(f_T)]
\end{equation}
where $\Delta_{D_n, f_T} := \Delta_{D_n}\otimes \Delta_{f_T|D_n}$.


\paragraph{Knowledge Distillation via Student Model} 
Consider another student hypothesis class $\calF_S\subset \calF$ consisting of predictive student models $f_S:\calX\to\calY$, which usually has lower complexity than a teacher model.  Given a trained teacher model $f_T$, it synthesizes a pseudo-dataset $\hat{D}_n=\{\hat{Z}_i\}_{i=1}^n$, where $\hat{Z}_i=(X_i,\hatY_i)$ and $\hatY_i=f_T(X_i)$. The joint distribution of $\hatD_n$ is denoted as $\Delta_{\hatD_n}$, which is induced by the joint randomness of $(D_n, f_T)$. 
A student model $f_S\in\calF_S$ is trained on the pseudo-dataset $\hatD_n$ via a learning algorithm characterized by $\Delta_{f_S|\hatD_n}$. We assume that $\Delta_{f_S|\hatD_n}(\cdot|d)\ll \Delta_{f_T|D_n}(\cdot|d)$ for any $d\in\calZ^n$. Similarly, the empirical risk on $\hatD_n$ and population risk with respect to $P$ for a student model $f_S$ are respectively defined as
\begin{equation}
    L_{\hatD_n}(f_S)\coloneqq \frac{1}{n} \sum_{i=1}^n \ell(f_S, \hat{Z}_i),~
    L_P(f_S) \coloneqq \mathbb{E}_{Z \sim P}[\ell(f_S, Z)].
\end{equation}
If the student model successfully distills the knowledge provided by the teacher, its predictions are expected to coincide with the ground truth in expectation. Hence, the generalization capability of the student model is characterized by the expected generalization gap
\begin{equation}
    \gen_S := \mathbb{E}_{\Delta_{\hat{D}_n, f_S}}[L_P(f_S) - L_{\hat{D}_n}(f_S)]
\end{equation}
where $\Delta_{\hatD_n, f_S} := \Delta_{\hatD_n}\otimes \Delta_{f_S|\hatD_n}$.

\section{Generalization Bounds via KL Divergence}
\label{sec:bounds}



We now derive upper and lower generalization bounds for the student model in terms of the teacher model and the KL divergence between the two learning processes, hereafter referred to as the \emph{distillation divergence}. Our analysis relies on the concentration properties of the teacher's generalization gap. We first present a standard upper bound based on sub-Gaussianity and rigorously justify this assumption via algorithmic stability. Subsequently, we explore a lower bound using the central condition. All detailed proofs are provided in Appendix. 

\subsection{Upper Bound under Sub-Gaussianity and  Stability}
\label{subsec:subgaussian}

To relate the student generalization gap $\mathrm{gen}_S$ to  teacher's $\mathrm{gen}_T$ and the distillation divergence, we employ the Donsker-Varadhan (DV) variational representation of the KL divergence.

Let $H(D_n, f_T) := L_P(f_T) - L_{D_n}(f_T)$ denote the generalization gap of the teacher for any data-model pair $(D_n,f_T)$. 
The following theorem establishes the primary bound.

\begin{theorem}[Distillation Generalization Upper Bound]
\label{thm:subgaussian_bound}
Assume that the teacher's generalization gap $H(D_n, f_T)$ is $\sigma^2$-sub-Gaussian under the teacher process $\Delta_{D_n, f_T}$. Then, the student's expected generalization error is bounded by:
\begin{equation}
    \mathrm{gen}_S \le \mathrm{gen}_T + \sigma\sqrt{2\mathsf{K}_n},
    \label{eq:main_bound}
\end{equation}
where $\mathsf{K}_n = \mathrm{KL}(\Delta_{\hat{D}_n, f_S} \| \Delta_{D_n, f_T})$.
\end{theorem}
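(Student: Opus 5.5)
The plan is a direct application of the Donsker--Varadhan (DV) variational representation. The key is to read both processes as laws on the same space $\calZ^n\times\calF$ and to evaluate the \emph{same} function $H$ under each of them. For any $\lambda\in\bbR$, DV gives
\begin{equation}
\mathsf{K}_n=\KL(\Delta_{\hatD_n,f_S}\|\Delta_{D_n,f_T})\ \ge\ \lambda\,\bbE_{\Delta_{\hatD_n,f_S}}[H]-\log\bbE_{\Delta_{D_n,f_T}}\big[e^{\lambda H}\big].
\end{equation}
The first step is to identify $\bbE_{\Delta_{\hatD_n,f_S}}[H]$ with $\gen_S$. Evaluating $H(d,f)=L_P(f)-L_d(f)$ at the pair $(\hatD_n,f_S)$ gives exactly $L_P(f_S)-L_{\hatD_n}(f_S)$, so its expectation under the student process is $\gen_S$ by definition. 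Under the teacher process the expectation of $H$ is $\gen_T$.

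The second step uses sub-Gaussianity to control the log-moment generating function:
\begin{equation}
\log\bbE_{\Delta_{D_n,f_T}}\big[e^{\lambda H}\big]\le\lambda\,\gen_T+\frac{\lambda^2\sigma^2}{2}.
\end{equation}
Here I read the stated definition in its standard form, $\log\bbE[\exp(\lambda(X-\bbE X))]\le\lambda^2\sigma^2/2$. Substituting gives, for every $\lambda$,
\begin{equation}
\lambda(\gen_S-\gen_T)\le \mathsf{K}_n+\lambda^2\sigma^2/2.
\end{equation}
For $\lambda>0$ this yields $\gen_S-\gen_T\le \mathsf{K}_n/\lambda+\lambda\sigma^2/2$. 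Choosing $\lambda=\sqrt{2\mathsf{K}_n}/\sigma$ minimizes the right-hand side and gives $\sigma\sqrt{2\mathsf{K}_n}$. If $\mathsf{K}_n=0$, the two laws coincide and the inequality holds trivially; if $\mathsf{K}_n=\infty$, there is nothing to prove.

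The main obstacle is measure-theoretic rather than computational. DV requires $\bbE_{\Delta_{D_n,f_T}}[e^{\lambda H}]<\infty$, which sub-Gaussianity guarantees. It also requires the student-side expectation of $H$ to be well-defined. I would handle this by first applying DV to the truncation $\min(H,M)$, or to $H$ restricted to $\{|H|\le M\}$, and then passing to the limit by monotone convergence. I would also check that $\mathsf{K}_n$ is finite precisely when $\Delta_{\hatD_n,f_S}\ll\Delta_{D_n,f_T}$; the kernel absolute-continuity assumption stated in the formulation supports this only together with $\Delta_{\hatD_n}\ll\Delta_{D_n}$, and otherwise $\mathsf{K}_n=\infty$. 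Finally, I would note that the argument applies the same measurable function $H$ to both joint laws, which is legitimate because both processes live on $\calZ^n\times\calF$.
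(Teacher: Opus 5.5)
Your proposal is correct and follows the paper's proof essentially step for step. It applies Donsker--Varadhan with $g=\lambda H$, $P=\Delta_{D_n,f_T}$ and $Q=\Delta_{\hatD_n,f_S}$. It then uses the sub-Gaussian log-mgf bound $\lambda\,\gen_T+\lambda^2\sigma^2/2$ (reading the garbled definition in the notation section in its standard form, as the paper's appendix lemma does), and chooses $\lambda^\star=\sqrt{2\mathsf{K}_n}/\sigma$. Your treatment of the cases $\mathsf{K}_n\in\{0,\infty\}$ and the truncation argument are extra care the paper omits.

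One small slip in a side remark: finiteness of $\mathsf{K}_n$ implies $\Delta_{\hatD_n,f_S}\ll\Delta_{D_n,f_T}$, but not conversely. So ``finite precisely when'' should read ``finite only if.'' This is harmless, since the case $\mathsf{K}_n=\infty$ is trivial.
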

\begin{proof}[Proof Sketch]
Consider the function $h(d, f) = \lambda(L_P(f) - L_d(f))$ for some $\lambda > 0$. Applying the DV inequality to the change of measure from $\Delta_{\hat{D}_n, f_S}$ to $\Delta_{D_n, f_T}$ yields:
$\lambda \mathrm{gen}_S \le \mathsf{K}_n + \log \mathbb{E}_{\Delta_{D_n, f_T}}\left[ e^{\lambda(L_P(f_T) - L_{D_n}(f_T))} \right]$.

Using the $\sigma^2$-sub-Gaussianity of the centered variable $H - \mathbb{E}[H]$ (where $\mathbb{E}[H] = \mathrm{gen}_T$), we bound the cumulant generating function by $\lambda \mathrm{gen}_T + \frac{\lambda^2\sigma^2}{2}$. Optimizing over $\lambda > 0$ yields the result.
\end{proof}
The KL divergence term $\mathsf{K}_n$ in \eqref{eq:main_bound}  quantifies the mismatch between the student and teacher pipelines. Using the chain rule of KL divergence, we decompose $\mathsf{K}_n$ as follows:
\begin{align}
   & \mathsf{K}_n = \underbrace{\mathrm{KL}(\Delta_{\hat{D}_n} \| \Delta_{D_n})}_{\text{Dataset Shift}} + \mathbb{E}_{\hat{D}_n}  [ \underbrace{\mathrm{KL}(\Delta_{f_S|\hat{D}_n} \| \Delta_{f_T|D_n})}_{\text{Algorithm Shift}} ].
    \label{eq:kl_decomposition}
\end{align}
The first term quantifies the distributional mismatch between real and pseudo data, while the second term captures the algorithmic discrepancy by comparing the student training kernel conditioned on $\hat{D}_n$ with the teacher training kernel conditioned on the real dataset $D_n$, averaged over the joint distribution $\Delta_{\hat{D}_n}$. Intuitively, the order of $\mathsf{K}_n$ with respect to $n$ is dominated by the first term, which may grow linearly in $n$ in certain regimes. In the following discussion and the case study, we show that the bound is non-vacuous under reasonable conditions. Theorem \ref{thm:subgaussian_bound} provides insight into the student distillation process. When $\gen_T$ is small, the teacher model generalizes well and provides an augmented pseudo-dataset $\hatD_n$. Under this case, Theorem \ref{thm:subgaussian_bound} implies that the student distillation process must remain close to the teacher training process in order to achieve better generalization performance.

\textbf{Justification of Sub-Gaussianity:} The sub-Gaussian condition in Theorem \ref{thm:subgaussian_bound} can be satisfied if the teacher learning algorithm is uniformly stable with respect to the training dataset $D_n$.  

\begin{proposition}[Sub-Gaussianity via Stability]
\label{prop:stability}
Assume the loss function is bounded, i.e., $\ell: \calF \times \calZ \to [a, b]$. Suppose the teacher learning algorithm $A_T: \mathcal{Z}^n \to \mathcal{F}$ is $\beta$-uniformly stable, meaning that for any two datasets $D_n, D_n^{(i)}$ differing by a single example $Z_i$,
$\sup_{z \in \mathcal{Z}} |\ell(A_T(D); z) - \ell(A_T(D^{(i)}); z)| \le \beta$.
Then, $H(D_n, f_T)$ is $\sigma^2$-sub-Gaussian with variance proxy:
\begin{equation}
    \sigma^2 = \frac{n}{4}\left(2\beta + \frac{b-a}{n}\right)^2.
\end{equation}
\end{proposition}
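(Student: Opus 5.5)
The plan is to apply McDiarmid's bounded-differences inequality, in its moment-generating-function form, to $H(D_n,f_T)=L_P(A_T(D_n))-L_{D_n}(A_T(D_n))$, viewed as a function of the $n$ independent samples $Z_1,\dots,Z_n$. Uniform stability is stated for a deterministic map $A_T:\calZ^n\to\calF$. If the algorithm is randomized, we either condition on its internal randomness, which is independent of $D_n$, or assume the stability bound holds uniformly over that randomness. In either case it suffices to treat $H$ as a deterministic function $g(Z_1,\dots,Z_n)$.

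\textbf{Step 1 (bounded differences).} Fix $i$ and let $D_n^{(i)}$ replace $Z_i$ by $Z_i'$. Write $f=A_T(D_n)$ and $f'=A_T(D_n^{(i)})$. The population-risk term changes by at most
\begin{equation}
|L_P(f)-L_P(f')|\le \sup_z|\ell(f;z)-\ell(f';z)|\le\beta .
\end{equation}
For the empirical term, split the sum into $j\neq i$ and $j=i$. Each of the $n-1$ terms with $j\ne i$ changes by at most $\beta$ by stability, which contributes at most $\beta$ in total after dividing by $n$. The $i$-th term compares $\ell(f;Z_i)$ with $\ell(f';Z_i')$; both lie in $[a,b]$, so it contributes at most $(b-a)/n$. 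Combining these gives $|g(D_n)-g(D_n^{(i)})|\le c:=2\beta+\frac{b-a}{n}$.

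\textbf{Step 2 (MGF bound).} We invoke the standard consequence of the Azuma--Hoeffding martingale argument. Let $V_k=\bbE[g\mid Z_{1:k}]-\bbE[g\mid Z_{1:k-1}]$. By Step 1, each increment $V_k$ lies, conditionally on $Z_{1:k-1}$, in an interval of length at most $c$. Hoeffding's lemma then gives $\bbE[e^{\lambda V_k}\mid Z_{1:k-1}]\le e^{\lambda^2c^2/8}$. Telescoping over $k$ yields
\begin{equation}
\bbE\big[e^{\lambda(H-\bbE H)}\big]\le \exp\!\big(\lambda^2 n c^2/8\big).
\end{equation}
This matches the sub-Gaussian definition with $\sigma^2=nc^2/4=\frac n4\big(2\beta+\frac{b-a}{n}\big)^2$, which is the claimed variance proxy.

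\textbf{Main obstacle.} The delicate step is the bounded-difference computation, specifically the $i$-th empirical term. A careless bound would give $\beta+(b-a)/n$ for the empirical part alone and lose track of where the factor $2\beta$ comes from. The clean accounting is $\beta$ from the population term, $\beta$ from the $n-1$ unchanged empirical terms, and $(b-a)/n$ from the replaced sample.

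A secondary point is the randomized-algorithm case. There we must check that conditioning on the internal randomness preserves the sub-Gaussian bound under the joint law $\Delta_{D_n,f_T}$. That holds if the constant $c$ is uniform over the randomness, because the conditional bound then averages to the same exponential bound. One caveat: if we condition, the centering is at the conditional mean rather than at $\mathrm{gen}_T$. To center at $\mathrm{gen}_T$, we would assume a deterministic $A_T$, which is the setting the statement writes.
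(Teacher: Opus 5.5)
Your proof is correct and follows essentially the paper's route: bounded differences with $c_i = 2\beta + \frac{b-a}{n}$, then McDiarmid. You bound the replaced term directly by $b-a$ rather than the paper's $\beta + (b-a)$, which only tightens the constant. In Step 2 you derive the MGF bound $e^{\lambda^2 n c^2/8}$ directly via the Doob martingale and Hoeffding's lemma; this is the cleaner justification of the exact proxy $nc^2/4$, since the paper instead invokes a tail-to-MGF ``equivalence'' that in general loses a constant factor.
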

\begin{proof}[Proof Sketch] The generalization gap $H(D_n) = L_P(A_T(D_n)) - L_{D_n}(A_T(D_n))$ satisfies the bounded difference property with constant $c_i = 2\beta + \frac{b-a}{n}$. By McDiarmid's inequality, the concentration tail is bounded by $\exp(-2\epsilon^2 / \sum c_i^2)$, which corresponds to the sub-Gaussian variance proxy stated above.\end{proof}
For standard stable algorithms where $\beta$ does not increase with $n$ (typically $\beta = O(1/n)$), we recover $\sigma^2 = O(1/n)$, ensuring the bound in \eqref{eq:main_bound} is non-vacuous.

\subsection{Lower Bounds via the Central Condition}
\label{subsec:central_condition}

In this section, we present a lower bound obtained under another assumption on the tail behavior of the loss, known as the $(\eta, c)$ central condition \cite{xu2017informationtheoreticanalysisgeneralizationcapability,wu2025fastrateinformationtheoreticbounds}. 

\begin{definition}[$(\eta, c)$-Central Condition]
A random variable $H$ satisfies the $(\eta, c)$-central condition under distribution $P$ for $\eta > 0$ and $0 < c \le 1$ if:
\begin{equation}
    \log \mathbb{E}_{P}[e^{-\eta H}] \le -c\eta \mathbb{E}_{P}[H].
\end{equation}
\end{definition}

Applying the DV inequality with the function $-\eta H$ leads to a linear relationship between the student and teacher generalization, distinct from the square-root scaling in \eqref{eq:main_bound}.

\begin{theorem}[Distillation Generalization Lower Bound]\label{thm:lower bound}
If the teacher's generalization gap $H(D_n, f_T)$ satisfies the $(\eta, c)$-central condition under $\Delta_{D_n, f_T}$, then:
\begin{equation}
    \mathrm{gen}_S \ge c \cdot \mathrm{gen}_T - \frac{1}{\eta}\mathsf{K}_n.
    \label{eq:central_bound}
\end{equation}
\end{theorem}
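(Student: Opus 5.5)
The plan is to apply the Donsker--Varadhan (DV) variational inequality exactly as in the proof of Theorem~\ref{thm:subgaussian_bound}, but with the test function $g(d,f) = -\eta\,(L_P(f) - L_d(f))$ instead of a positive multiple of the gap. Recall that for any measurable $g$ with $\mathbb{E}_{\nu}[e^{g}]<\infty$ and $\mu\ll\nu$, DV gives
\[
\mathbb{E}_{\mu}[g] \le \mathrm{KL}(\mu\|\nu) + \log \mathbb{E}_{\nu}[e^{g}].
\]
We take $\mu = \Delta_{\hat{D}_n, f_S}$ and $\nu = \Delta_{D_n, f_T}$, both viewed as measures on the common space $\mathcal{Z}^n\times\mathcal{F}$. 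Then $\mathbb{E}_\mu[g] = -\eta\,\mathrm{gen}_S$ and $\mathrm{KL}(\mu\|\nu)=\mathsf{K}_n$.

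The first step is to justify absolute continuity. We need $\Delta_{\hat{D}_n}\ll\Delta_{D_n}$ together with the assumed kernel condition $\Delta_{f_S|\hat{D}_n}(\cdot|d)\ll\Delta_{f_T|D_n}(\cdot|d)$. If $\mathsf{K}_n=\infty$ the claim is trivial, so we may assume $\mathsf{K}_n<\infty$, which forces $\mu\ll\nu$. We also need $\mathbb{E}_\nu[e^{-\eta H}]<\infty$, and this is implicit in the central condition, whose left side is assumed to be at most a finite quantity.

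The second step is to bound the cumulant term. By the $(\eta,c)$-central condition applied to $H(D_n,f_T)$ under $\nu$,
\[
\log\mathbb{E}_\nu[e^{-\eta H}]\le -c\eta\,\mathbb{E}_\nu[H] = -c\eta\,\mathrm{gen}_T .
\]
Substituting into DV gives $-\eta\,\mathrm{gen}_S\le \mathsf{K}_n - c\eta\,\mathrm{gen}_T$. Dividing by $\eta>0$ and rearranging yields $\mathrm{gen}_S\ge c\,\mathrm{gen}_T - \mathsf{K}_n/\eta$. Unlike the sub-Gaussian case, no optimization over $\lambda$ is needed, because $\eta$ is fixed by the assumption. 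This is why the dependence on $\mathsf{K}_n$ is linear rather than square-root.

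The main obstacle is technical rather than conceptual. The functional $H(d,f)=L_P(f)-L_d(f)$ must be the same measurable function on both sides, and $\mathbb{E}_\mu[g]$ must be well defined, i.e.\ $\mathrm{gen}_S$ must exist. I would handle this with the standard form of DV for functions bounded above, or by truncating $g$ at level $M$, applying DV, and letting $M\to\infty$ by monotone convergence. Nonnegativity of the loss makes the empirical part well behaved.
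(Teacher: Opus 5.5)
Your proposal is correct and is the paper's proof: apply Donsker--Varadhan with $P=\Delta_{D_n,f_T}$, $Q=\Delta_{\hat D_n,f_S}$ and $g=-\eta h$, bound $\log\mathbb{E}_P[e^{-\eta h}]$ by $-c\eta\,\mathrm{gen}_T$ using the central condition, then divide by $-\eta$ to flip the inequality. Your remarks on absolute continuity (the trivial case $\mathsf{K}_n=\infty$) and on integrability are extra care that the paper omits, and they do not change the argument.
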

Theorem \ref{thm:lower bound} implies that if the distillation divergence $\mathsf{K}_n$ and $\gen_T$ are both small, the student effectively inherits a significant fraction $c$ of the teacher's generalization performance, coinciding with the implication from Theorem \ref{thm:subgaussian_bound}. 
On the other hand, even when $\mathrm{gen}_T$ is large, it may happen that the distillation pipeline induces a comparably large $\mathsf{K}_n$, which makes the lower bound in \eqref{eq:central_bound} small.
In such cases, the bound leaves room for the student to achieve a small $\mathrm{gen}_S$, consistent with the phenomenon described in the introduction that the student can sometimes generalize better than the teacher.
\begin{remark}[Relationship between Assumptions]
Notably, the central condition is not  entirely decoupled with and is often implied by sub-Gaussianity.
As derived in our analysis, if the random variable $H$ is $\sigma^2$-sub-Gaussian with mean $\bbE[H]=\mathrm{gen}_T > 0$, it satisfies the $(\eta, c)$-central condition provided that $\eta$ is sufficiently small. Specifically, the condition holds if:
\begin{equation}
    c \le 1 - \frac{\eta \sigma^2}{2\mathrm{gen}_T}, \quad \text{and} \quad \eta < \frac{2\mathrm{gen}_T}{\sigma^2}.
\end{equation}
This connection demonstrates that stable teacher algorithms (which are sub-Gaussian) naturally admit this refined characterization for appropriate choices of $\eta$.
\end{remark}

\subsection{A Linear Gaussian Case Study of Distillation Divergence}
\label{subsec:linear_case}

In this section,  we instantiate the distillation divergence $\mathsf{K}_n$ in an a linear Gaussian setting, where real labels are generated by a linear transformation of the features corrupted by Gaussian noise, and the student model is a low-rank approximation of the teacher model. The purpose is to turn the abstract term $\mathsf{K}_n$ into an interpretable checklist of what distillation must control. The emphasis is on an explanatory decomposition rather than on optimizing constants.

We collect the features and labels column-wise into matrices
$X_n \in\mathbb{R}^{d\times n}$ and
$Y_n \in\mathbb{R}^{k\times n}$,
so that the dataset $D_n$ can be equivalently represented by the pair $(X_n,Y_n)$. We assume a noisy label channel $Y_n | X_n \sim \mathcal{MN}(W_\star X_n,\, I_k,\, \nu^2 I_n)$, where $\mathcal{MN}(M,U,V)$ denotes a matrix normal distribution with mean $M$, row covariance $U$, and column covariance $V$, and $\nu$ is a constant noise level.

\paragraph{Teacher as a Gibbs Learner.}
Let the teacher parameter be $W\in\mathbb{R}^{k\times d}$ with Gaussian prior $p_0(W)=\mathcal{MN}(0,I_k,\lambda^{-1}I_d)$. Given $D_n$, define the Gibbs posterior with inverse temperature $\beta_T$ by
$q_T(W\mid D_n)\propto p_0(W)\exp\!\big(-\frac{\beta_T}{2\nu^2}\|Y_n-WX_n\|_F^2\big)$.
Completing the square yields $q_T(W\mid D_n)=\mathcal{MN}(\bar W_T, I_k, \Sigma_T)$ with
$\Sigma_T=\big(\lambda I_d+\frac{\beta_T}{\nu^2}X_nX_n^\top\big)^{-1}$ and $\bar W_T=\frac{\beta_T}{\nu^2}Y_nX_n^\top\Sigma_T$.

\paragraph{Low-Rank Student and a Rank Bottleneck.}
Sample $W_T\sim q_T(\cdot\mid D_n)$ and generate pseudo labels via the same noisy channel
$\hat Y_n\mid (W_T,X_n)\sim \mathcal{MN}(W_T X_n, I_k, \nu^2 I_n)$, given $\hat D_n=(X_n,\hat Y_n)$.
To model an explicit student capacity constraint, introduce a rank-$\kappa$ map $M^\star(W_T,X_n)$ as a best rank-$\kappa$ approximation in prediction space:
$M^\star(W_T,X_n)\in \arg\min_{M:\,\mathrm{rank}(M)=\kappa}\|W_T X_n - W_T M X_n\|_F^2$.
Using the SVD-based solution from \cite{chen2021drone}, we define the student as a local Gaussian centered at the projected parameters $\Theta^\circ = W_T M^\star$, specifically $q_S(\Theta\mid W_T,X_n)=\mathcal{MN}(\Theta^\circ, I_k, \Sigma_S)$ for some $\Sigma_S\succ 0$.

First, for the dataset-shift term, convexity of the KL divergence allows us to bound the mismatch using the teacher's prediction error. This error decomposes into the teacher's bias and variance on the observed design: \begin{align}
&\KL(\Delta_{\hat D_n}\,\|\,\Delta_{D_n})
\le \frac{1}{2\nu^2}\,\E_{D_n}\E_{W_T\mid D_n}\big[\|(W_T-W_\star)X_n\|_F^2\big] \nonumber \\
&= \E_{D_n}\left[ \frac{1}{2\nu^2}\big(\mathrm{Bias}(D_n)+k\,\mathrm{Var}(D_n)\big) \right],
\label{eq:lin_data_term}
\end{align}
where $\mathrm{Bias}(D_n)\coloneqq\|(\bar W_T-W_\star)X_n\|_F^2$ and $\mathrm{Var}(D_n)\coloneqq\mathrm{tr}(X_n^\top\Sigma_T X_n)$.

Second, we bound the algorithm-shift by analyzing the student's residual in prediction space and the geometry of the posteriors:
\begin{align}
&\quad \E\Big[\KL(\Delta_{f_S\mid \hat D_n}\,\|\,\Delta_{f_T\mid D_n})\Big]
\le  \nonumber \\ 
&\E_{D_n}\Big[\mathrm{Apx}(D_n)+\mathrm{Cov}(\Sigma_S,\Sigma_T)+\mathrm{Spread}(D_n)\Big].
\label{eq:lin_alg_term}
\end{align}
Here, the terms are defined as follows: 
$\mathrm{Apx}(D_n) \coloneqq \lambda\,\mathbb E_{W_T\mid D_n}\|W_T(I-M^\star)\|_F^2+\frac{\beta_T}{\nu^2}\,\mathbb E_{W_T\mid D_n}\|W_T(I-M^\star)X_n\|_F^2$ quantifies the energy discarded by the rank-$\kappa$ bottleneck. 
$\mathrm{Cov}(\Sigma_S,\Sigma_T) \coloneqq \frac{k}{2}\big(\mathrm{tr}(\Sigma_T^{-1}\Sigma_S)-d+\log\frac{\det\Sigma_T}{\det\Sigma_S}\big)$ measures the misalignment between student and teacher covariance structures. 
$\mathrm{Spread}(D_n) \coloneqq k\big(\lambda\,\mathrm{tr}(\Sigma_T)+\frac{\beta_T}{\nu^2}\mathrm{tr}(X_n^\top\Sigma_T X_n)\big)$ represents the residual contribution from teacher posterior fluctuations.

Combining \eqref{eq:lin_data_term} and \eqref{eq:lin_alg_term} yields the compact bound:
\begin{align}
\mathsf{K}_n
\ &\le\
\E_{D_n}\!\Big[
\underbrace{\frac{1}{2\nu^2}\big(\mathrm{Bias}(D_n)+k\,\mathrm{Var}(D_n)\big)}_{\text{Teacher Prediction Error}}
+ \nonumber\\
&\underbrace{\mathrm{Apx}(D_n)}_{\text{Student Capacity}}
+\underbrace{\mathrm{Cov}(\Sigma_S,\Sigma_T)}_{\text{Geometry Mismatch}}
+\underbrace{\mathrm{Spread}(D_n)}_{\text{Posterior Spread}}
\Big].
\label{eq:lin_total_KL_compact}
\end{align}
This decomposition makes the trade-offs explicit: reducing teacher bias/variance improves the data term, while increasing the effective rank $\kappa$ minimizes $\mathrm{Apx}(D_n)$. Furthermore, aligning the student's sampling geometry $\Sigma_S$ with $\Sigma_T$ minimizes the covariance penalty.

Finally, we examine how these terms scale with sample size $n$. The approximation term $\mathrm{Apx}(D_n)$, viewed as a Frobenius norm over a $k \times n$ prediction matrix, grows linearly with $n$ due to accumulating residuals. This is acceptable since the final bound includes a normalization factor like $\sqrt{\mathsf{K}_n/n}$, yielding an overall $O(1)$ error. In contrast, the teacher-dependent terms ($\mathrm{Bias}$ and $\mathrm{Var}$) scale more favorably: posterior contraction of $W_T$ offsets the linear growth, keeping their contribution at $O(1)$. This result verifies that the bound in Theorem \ref{thm:subgaussian_bound} is non-vacuous. 

\subsection{Sharpness-Aware Bounds and a Tightness Radius}
\label{subsec:sharpness}

In this section, we aim to introduce the effect of local loss-landscape geometry into the distillation analysis and obtain a tighter generalization bound.
We quantify local flatness via randomized perturbations. Let $U$ be uniformly
distributed on the ball $\{u:\|u\|\le\rho\}$ and independent of all other
randomness. For any dataset $D\in\mathcal Z^n$ and model $f$, define the
empirical and population sharpness
\begin{align}
S_D(f)
&:= \E_U\!\big[L_D(f+U)\big]-L_D(f),
\label{eq:sharp_emp}\\
S_P(f)
&:= \E_U\!\big[L_P(f+U)\big]-L_P(f).
\label{eq:sharp_pop}
\end{align}
These quantities measure the average increase of risk under radius-$\rho$
perturbations, so smaller values indicate a locally flatter landscape at the
chosen scale.

To justify concentration without assuming sub-Gaussianity \emph{a priori}, we
impose mild boundedness, smoothness, and stability conditions. We assume the
loss $\ell$ is bounded in $[a,b]$ (A1) and globally $L$-Lipschitz (A3). Around
the teacher output $f_T=A_T(D_n)$, we assume local regularity on
$B(f_T,2\rho)$: $\ell(\cdot;z)$ is $\alpha$-smooth and its gradients are
uniformly bounded by $g_0$ (A4), yielding a refined local Lipschitz scale
$g_0+\alpha\rho$ at radius $\rho$. We also assume the teacher algorithm
satisfies parameter stability (A2), namely for any neighboring datasets
$D,D^{(i)}$ differing in one example,
$\|A_T(D)-A_T(D^{(i)})\|\le \kappa/n$. Finally, to control population
sharpness, we assume bounded local curvature of the population risk on
$B(f_T,\rho)$ (A5), i.e.,
$\sup_{\|v\|\le\rho}\|\nabla^2 L_P(f_T+v)\|_{\op}\le \tau_{\op}$, which implies
a quadratic upper bound on $\E[S_P(f_T)]$ at scale $\rho$. These conditions,
together with bounded differences and McDiarmid's inequality, yield
sub-Gaussian concentration for the perturbed teacher generalization gap and
for the empirical sharpness, with variance proxies that depend explicitly on
$(\rho,n)$ and the constants $(L,g_0,\alpha,\kappa,\tau_{\op})$.

Our goal is to upper bound the student generalization $\gen_S$ in terms of
teacher quality, teacher local flatness, and the distillation divergence
$\mathsf{K}_n=\KL(\Delta_{\hat D_n,f_S}\|\Delta_{D_n,f_T})$. A key point is
that $\mathsf{K}_n$ is the only KL quantity used here, so it captures, in a
single term, both the pseudo-data generation $\Delta_{\hatD_n\mid D_n}$ and the
student learning process $\Delta_{f_S\mid \hat D_n}$.

We adopt the standard flatness-based simplification that the returned student
is not improved, on average, by a small random perturbation at radius $\rho$:
$\E_{\hat D_n,f_S}[L_P(f_S)]\le \E_{\hat D_n,f_S,U}[L_P(f_S+U)]$. This avoids an
almost-sure condition on the randomized output while providing the inequality
needed to compare unperturbed and perturbed risks in expectation.


Let $h(D,f):=L_P(f)-L_D(f)$ and define the perturbed gap
$h_U(D,f):=\E_U[h(D,f+U)]$. The next theorem summarizes the resulting guarantee.

\begin{theorem}[Sharpness-Aware Distillation Generalization Bound]
\label{thm:sharp_main}
Suppose that under the teacher process $\Delta_{D_n,f_T}$, both $h_U(D_n,f_T)$ and $S_{D_n}(f_T)$ admit sub-Gaussian moment bounds
with proxies $\sigma_u^2$ and $\nu^2$, respectively. Then
\begin{equation}
\gen_S
\le
\gen_T
+\E_{D_n,f_T}\!\big[S_P(f_T)\big]
+(\sigma_u+\nu)\sqrt{2\mathsf{K}_n}.
\label{eq:sharp_bound_main}
\end{equation}
\end{theorem}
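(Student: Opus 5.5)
The plan is to reduce the statement to two applications of the Donsker--Varadhan argument already used in the proof of Theorem~\ref{thm:subgaussian_bound}: one for the perturbed gap $h_U$ and one for the empirical sharpness $S_D$. Both changes of measure use the same divergence $\mathsf{K}_n$. Everything is then recombined on the teacher side using the definitions \eqref{eq:sharp_emp}--\eqref{eq:sharp_pop}.

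\textbf{Step 1 (perturb the student).} Start from $\gen_S=\E_{\hat D_n,f_S}[L_P(f_S)-L_{\hat D_n}(f_S)]$. Apply the flatness simplification $\E[L_P(f_S)]\le \E[L_P(f_S+U)]$, then add and subtract $\E_U[L_{\hat D_n}(f_S+U)]$. This gives
\begin{equation}
\gen_S \le \E_{\hat D_n,f_S}\big[h_U(\hat D_n,f_S)\big] + \E_{\hat D_n,f_S}\big[S_{\hat D_n}(f_S)\big].
\end{equation}
This works because $\E_U[L_P(f+U)]-L_d(f)=h_U(d,f)+S_d(f)$ identically in $(d,f)$.

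\textbf{Step 2 (transfer each term to the teacher process).} Both $h_U$ and $S_D$ are fixed measurable functions of a data--model pair $(d,f)$, evaluated at $(\hat D_n,f_S)$ under $\Delta_{\hat D_n,f_S}$ and at $(D_n,f_T)$ under $\Delta_{D_n,f_T}$. So the DV inequality applies to each with any $\lambda>0$:
\begin{equation}
\lambda\,\E_{\Delta_{\hat D_n,f_S}}[g] \le \mathsf{K}_n + \log \E_{\Delta_{D_n,f_T}}\big[e^{\lambda g}\big].
\end{equation}
For $g=h_U$, the assumed sub-Gaussian moment bound gives a log-MGF of at most $\lambda\E_T[h_U]+\lambda^2\sigma_u^2/2$. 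Optimizing over $\lambda$ exactly as in Theorem~\ref{thm:subgaussian_bound} yields $\E_S[h_U]\le \E_T[h_U(D_n,f_T)]+\sigma_u\sqrt{2\mathsf{K}_n}$. The identical argument with proxy $\nu^2$ gives $\E_S[S_{\hat D_n}(f_S)]\le \E_T[S_{D_n}(f_T)]+\nu\sqrt{2\mathsf{K}_n}$. The two $\lambda$'s are optimized separately, which is why the penalty is $(\sigma_u+\nu)\sqrt{2\mathsf{K}_n}$ rather than $\sqrt{\sigma_u^2+\nu^2}$-type.

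\textbf{Step 3 (recombine on the teacher side).} Summing the two teacher expectations gives
\begin{equation}
\E_{D_n,f_T}\big[\E_U L_P(f_T+U)-L_{D_n}(f_T)\big]=\gen_T+\E_{D_n,f_T}[S_P(f_T)],
\end{equation}
using $\E_U L_P(f_T+U)=L_P(f_T)+S_P(f_T)$. Together with Steps 1--2 this yields \eqref{eq:sharp_bound_main}.

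\textbf{Main obstacle.} The algebra is straightforward. The delicate points are technical:
\begin{itemize}
\item $h_U$ and $S_D$ must be integrable under both processes. Boundedness (A1) of $\ell$ in $[a,b]$ handles this, since all perturbed losses stay bounded.
\item The bounded loss also keeps the DV inequality valid (no infinite exponential moments).
\item Fubini is needed to interchange $\E_U$ with the process expectations, using that $U$ is independent of everything else.
\item The absolute-continuity assumption $\Delta_{f_S|\hat D_n}\ll\Delta_{f_T|D_n}$ ensures $\mathsf{K}_n$ is well defined.
\end{itemize}
Once these are checked, the sub-Gaussian proxies are used only as hypotheses; verifying them via (A2)--(A5) and McDiarmid is a separate matter not needed for this theorem.
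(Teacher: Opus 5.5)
Your proposal is correct and follows the paper's proof essentially step for step. The flatness assumption and the add-and-subtract identity split $\mathrm{gen}_S$ into $\E[h_U(\hat D_n,f_S)]+\E[S_{\hat D_n}(f_S)]$; two separately optimized Donsker--Varadhan transfers with proxies $\sigma_u$ and $\nu$ move each term to the teacher process; and the teacher-side recombination (which the paper writes as an explicit cancellation of $\E[S_{D_n}(f_T)]$) leaves $\mathrm{gen}_T+\E[S_P(f_T)]+(\sigma_u+\nu)\sqrt{2\mathsf{K}_n}$.
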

\begin{proof}[Proof Sketch] By the local optimality condition and adding/subtracting the perturbed
empirical term, we obtain the decomposition
$\gen_S \le \E_{\hat D_n,f_S}[H_\rho(\hat D_n,f_S)] + \E_{\hat D_n,f_S}[S_{\hat D_n}(f_S)]$,
where $H_\rho(D,f)=\E_U[L_P(f+U)-L_D(f+U)]$. A change-of-measure inequality
(Donsker--Varadhan) transfers both expectations to the teacher process,
paying $\sigma_u\sqrt{2\mathsf{K}_n}$ and $\nu\sqrt{2\mathsf{K}_n}$,
respectively. Finally,
$\E_{D_n,f_T}[H_\rho(D_n,f_T)]=\gen_T+\E[S_P(f_T)]-\E[S_{D_n}(f_T)]$,
so the $-\E[S_{D_n}(f_T)]$ term cancels with the teacher-side control of
$\E[S_{\hat D_n}(f_S)]$, yielding \eqref{eq:sharp_bound_main}. \end{proof} Eq. \eqref{eq:sharp_bound_main} decomposes the error into three components: direct transfer of teacher quality ($\gen_T$), an explicit local-geometry term ($\E[S_P(f_T)]$), and a process-mismatch penalty $\sqrt{\mathsf{K}_n}$ scaled by $(\sigma_u+\nu)$. Since $\sigma_u(\rho)$ substitutes the global Lipschitz scale with a local measure around $f_T$, it may be strictly smaller than the counterpart found in standard global bounds. This motivates a direct comparison against the baseline, revealing a radius regime in which accounting for local geometry provides a provably tighter guarantee.

We compare \eqref{eq:sharp_bound_main} to a baseline bound that depends only
on global geometry. Define the standard bound
$B_{\mathrm{std}}:=\gen_T+\sigma_0\sqrt{2\mathsf{K}_n}$ and the
sharpness-aware bound
$B_{\mathrm{sh}}(\rho):=\gen_T+\E[S_P(f_T)]+\big(\sigma_u(\rho)+\nu(\rho)\big)\sqrt{2\mathsf{K}_n}$.

Under the standing stability and smoothness assumptions, one obtains the
convenient proxies
\begin{align}
\sigma_0
&= \frac{1}{\sqrt{n}}\Big(\kappa L+\frac{b-a}{2}\Big),
\label{eq:sigma0_short}\\
\sigma_u(\rho)
&= \frac{1}{\sqrt{n}}\Big(\kappa(g_0+\alpha\rho)+\frac{b-a}{2}\Big),
\label{eq:sigmau_short}\\
\nu(\rho)
&= \frac{1}{\sqrt{n}}\Big(
\frac{\alpha\kappa}{2}\rho + g_0\rho + \frac{\alpha}{2}\rho^2
\Big),
\label{eq:nu_short}\\
\E[S_P(f_T)]
&\le \frac{1}{2}\tau_{\op}\rho^2.
\label{eq:SP_short}
\end{align}
The structure is intuitive: $\sigma_u(\rho)$ improves upon $\sigma_0$ by
replacing the global Lipschitz scale $L$ with the local scale
$g_0+\alpha\rho$, while $\nu(\rho)$ and $\E[S_P(f_T)]$ grow with $\rho$,
reflecting the increasing cost of probing larger neighborhoods.

\textbf{A sufficient condition for strict improvement:}
Let $\Delta:=L-g_0$ denote the local flatness gap. When $\Delta>0$, the
teacher neighborhood around $f_T$ is flatter than the global worst case.
Define
\begin{equation}
A_0:=\kappa\Delta,\quad
A_1:=\frac{3}{2}\alpha\kappa+g_0,\quad
A_2:=\frac{\alpha}{2}.
\label{eq:A012}
\end{equation}
A short calculation from \eqref{eq:sigma0_short}--\eqref{eq:nu_short} gives
\begin{equation}
\sigma_0-\sigma_u(\rho)-\nu(\rho)
=
\frac{1}{\sqrt{n}}\big(A_0-A_1\rho-A_2\rho^2\big).
\label{eq:sigmadiff}
\end{equation}
Using \eqref{eq:SP_short}, a sufficient condition for
$B_{\mathrm{sh}}(\rho)<B_{\mathrm{std}}$ is
\begin{equation}
\frac{1}{\sqrt{n}}\big(A_0-A_1\rho-A_2\rho^2\big)\sqrt{2\mathsf{K}_n}
>
\frac{1}{2}\tau_{\op}\rho^2.
\label{eq:suff_cond}
\end{equation}

\begin{corollary}[A Tightening Interval]
\label{cor:interval}
Assume $\Delta>0$. Define
\begin{align}
c_0&:=A_0\sqrt{2\mathsf{K}_n}, \label{eq:c0}\\
c_1&:=A_1\sqrt{2\mathsf{K}_n}, \label{eq:c1}\\
c_2&:=A_2\sqrt{2\mathsf{K}_n}+\frac{1}{2}\tau_{\op}\sqrt{n}. \label{eq:c2}
\end{align}
Let $\rho_0$ be the unique positive root of $c_2\rho^2+c_1\rho-c_0=0$: 
\begin{equation}
\rho_0
=
\frac{-c_1+\sqrt{c_1^2+4c_2c_0}}{2c_2}.
\label{eq:rho0}
\end{equation}
Then for all $0<\rho<\rho_0$, \eqref{eq:suff_cond} holds and hence
$B_{\mathrm{sh}}(\rho)<B_{\mathrm{std}}$.
\end{corollary}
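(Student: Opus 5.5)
The plan is to reduce the corollary to an elementary sign analysis of a quadratic. Multiply both sides of the sufficient condition \eqref{eq:suff_cond} by $\sqrt{n}>0$ and move everything to one side. With the definitions \eqref{eq:c0}--\eqref{eq:c2}, \eqref{eq:suff_cond} is equivalent to
\begin{equation}
q(\rho) := c_0 - c_1\rho - c_2\rho^2 > 0 .
\end{equation}
The key bookkeeping step is to collect the $\rho^2$ coefficients. The term $A_2\rho^2\sqrt{2\mathsf{K}_n}$ from the left side and $\tfrac12\tau_{\op}\sqrt{n}\,\rho^2$ from the right side combine into exactly $c_2\rho^2$.

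Next I would record the signs of the coefficients.
\begin{itemize}
\item Since $\Delta>0$ and $\kappa>0$, we have $A_0=\kappa\Delta>0$.
\item We have $A_1=\tfrac32\alpha\kappa+g_0\ge 0$, since $\alpha,\kappa,g_0$ are nonnegative smoothness, stability and gradient constants.
\item We have $A_2=\alpha/2\ge0$ and $\tau_{\op}\ge 0$.
\end{itemize}
Assuming $\mathsf{K}_n>0$, this gives $c_0>0$, $c_1\ge0$ and $c_2\ge0$. I will take $c_2>0$, which holds whenever $\alpha>0$ or $\tau_{\op}>0$; this is needed for \eqref{eq:rho0} to be well defined. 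I would flag the degenerate cases in a remark: if $c_2=0$ then $q$ is linear, and one gets $\rho_0=c_0/c_1$ (or $\rho_0=\infty$ if also $c_1=0$). If $\mathsf{K}_n=0$, the comparison is between $\gen_T$ and $\gen_T+\E[S_P(f_T)]$, so no strict gain can be claimed; this edge case should be excluded.

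The main step is the root analysis. The polynomial $p(\rho)=c_2\rho^2+c_1\rho-c_0=-q(\rho)$ has product of roots $-c_0/c_2<0$, so it has exactly one positive and one negative root. The positive one is given by \eqref{eq:rho0}, because $\sqrt{c_1^2+4c_2c_0}>c_1$. Since $p$ is an upward parabola with $p(0)=-c_0<0$, it is negative on the whole interval between its roots. In particular $p(\rho)<0$, equivalently $q(\rho)>0$, for every $0<\rho<\rho_0$. Alternatively, monotonicity suffices: $p$ is strictly increasing on $[0,\infty)$ because $c_1\ge0$ and $c_2>0$, and it vanishes only at $\rho_0$.

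Finally I would chain back to the bounds. For $0<\rho<\rho_0$, \eqref{eq:suff_cond} holds. By \eqref{eq:sigmadiff}, its left side equals $(\sigma_0-\sigma_u(\rho)-\nu(\rho))\sqrt{2\mathsf{K}_n}$. By \eqref{eq:SP_short}, its right side is at least $\E[S_P(f_T)]$. Hence
\begin{equation}
\E[S_P(f_T)]+(\sigma_u(\rho)+\nu(\rho))\sqrt{2\mathsf{K}_n}<\sigma_0\sqrt{2\mathsf{K}_n}.
\end{equation}
Adding $\gen_T$ to both sides gives $B_{\mathrm{sh}}(\rho)<B_{\mathrm{std}}$.

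I expect no real obstacle. The only care points are confirming that the identity \eqref{eq:sigmadiff} holds with the stated $A_1$, meaning the $\alpha\kappa$ terms combine to $\tfrac32\alpha\kappa$, and handling the degenerate cases $c_2=0$ or $\mathsf{K}_n=0$ noted above.
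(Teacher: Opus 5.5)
Your proposal is correct and follows the paper's proof essentially step for step. You rewrite \eqref{eq:suff_cond} as $c_2\rho^2+c_1\rho-c_0<0$, use $c_0>0$ and $c_2>0$ to get the unique positive root $\rho_0$, and chain back through \eqref{eq:sigmadiff} and \eqref{eq:SP_short} to conclude $B_{\mathrm{sh}}(\rho)<B_{\mathrm{std}}$. Your explicit nondegeneracy conditions ($\mathsf{K}_n>0$, $\kappa>0$, and $\alpha>0$ or $\tau_{\op}>0$) are a useful addition, since the paper asserts $c_0>0$ and $c_2>0$ without stating them.
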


The left side of \eqref{eq:suff_cond} decreases with $\rho$ due to the
negative $A_1\rho$ and $A_2\rho^2$ terms, while the right side grows as
$\rho^2$. When $\Delta>0$, the constant gain term $A_0$ is positive, so the
inequality holds at sufficiently small radii. As $\rho$ increases, quadratic
growth in sharpness eventually dominates, producing a unique cutoff radius
$\rho_0$. The interval $(0,\rho_0)$ formalizes the idea that local geometry
helps only within a neighborhood where the teacher is sufficiently flat.

Eq.~\eqref{eq:sharp_bound_main} suggests three aligned design principles:
improving teacher flatness at small scales reduces $\E[S_P(f_T)]$, tightening
the coupling between student and teacher processes reduces $\mathsf{K}_n$,
and the perturbation radius $\rho$ should be treated as a local scale
parameter. Corollary~\ref{cor:interval} makes the last point explicit by
exhibiting a provable range of radii that guarantees a tighter bound whenever
$\Delta>0$.

\section{Conclusion and Future work}
In this work, we established a unified information-theoretic framework for knowledge distillation by introducing the distillation divergence ($\mathsf{K}_n$) to quantify mismatches in teacher training and student distillation processes. We derived stability-based upper bounds and central-condition lower bounds, demonstrating that a teacher's local flatness can strictly tighten generalization guarantees, while a linear Gaussian case study provided interpretable decompositions of $\mathsf{K}_n$ into bias, variance, and rank-bottleneck terms. Future work will focus on three key directions: conducting experimental validation on large-scale benchmarks to verify the correlation between $\mathsf{K}_n$ and generalization gaps, refining the theory to close the gap between the upper and lower bounds, and designing new algorithms that explicitly minimize the derived divergence components to improve student performance.

\bibliographystyle{IEEEtran}
\bibliography{ISIT_references}

\onecolumn
\section{Appendix}
\subsection{Preliminary Tools}

\subsubsection{Donsker--Varadhan change of measure inequality}
\begin{lemma}[Donsker--Varadhan inequality]
\label{lem:dv}
Let $P,Q$ be probability measures on the same measurable space with $Q\ll P$.
For any measurable function $g$ with $\bbE_P[e^{g}]<\infty$,
\begin{equation}
\bbE_Q[g]
\le
\KL(Q\|P) + \log \bbE_P[e^{g}].
\label{eq:dv}
\end{equation}
\end{lemma}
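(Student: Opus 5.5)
I would prove the Donsker--Varadhan inequality in Lemma~\ref{lem:dv} by reducing it to the nonnegativity of a KL divergence against a suitably tilted measure. This is the Gibbs variational principle.

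\textbf{Reductions.} Assume without loss of generality that $\KL(Q\|P)<\infty$; otherwise there is nothing to prove. Since $\bbE_P[e^{g}]<\infty$ and $e^{g}>0$, the normalizer $Z:=\bbE_P[e^{g}]$ lies in $(0,\infty)$.

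\textbf{Tilted measure and the identity.} Define $G$ by $dG/dP = e^{g}/Z$. Then $G$ is a probability measure equivalent to $P$. Because $Q\ll P$ and $G\sim P$, we also have $Q\ll G$, with
\[
\log\frac{dQ}{dG} = \log\frac{dQ}{dP} - g + \log Z \qquad Q\text{-a.s.}
\]
Integrating against $Q$ formally gives
\[
0\le \KL(Q\|G) = \KL(Q\|P) - \bbE_Q[g] + \log Z,
\]
which rearranges to \eqref{eq:dv}.

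\textbf{Main obstacle: integrability.} The formal identity requires splitting $\bbE_Q[\log\frac{dQ}{dP} - g]$ into two terms. This needs $\bbE_Q[g]$ to be well defined, possibly $-\infty$, and we must avoid an $\infty-\infty$ expression. I would handle this in three steps.

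\emph{Positive part.} First bound $\bbE_Q[g^+]$ by Young's inequality $xy\le x\log x - x + e^{y}$. Apply it with $x=dQ/dP$ and $y=g^+$, then integrate against $P$. This gives
\[
\bbE_Q[g^+]\le \KL(Q\|P) + \bbE_P[e^{g^+}] < \infty,
\]
where finiteness uses $e^{g^+}\le 1+e^{g}$. Hence $\bbE_Q[g]\in[-\infty,\infty)$ is well defined.

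\emph{Case $\bbE_Q[g]=-\infty$.} The inequality is trivial.

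\emph{Case $\bbE_Q[g]$ finite.} Then $\log\frac{dQ}{dG}$ is a sum of $Q$-integrable pieces. The decomposition of $\KL(Q\|G)$ is legitimate, and nonnegativity of KL (via Jensen) finishes the argument.

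\textbf{Backup route.} Alternatively, I could truncate $g$ to $g\wedge M$, prove the inequality for bounded functions, and pass to the limit $M\to\infty$. The right-hand side converges by monotone convergence, and the left-hand side converges by monotone convergence applied to $g^-$ together with the integrability of $g^+$ established above.
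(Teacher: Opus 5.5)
Your proof is correct and is essentially the paper's argument. The paper's one-line Jensen step, $\log \bbE_P[r\cdot e^{g}/r]\ge \bbE_P[r\log(e^{g}/r)]$ with $r=dQ/dP$, is exactly the statement $\KL(Q\|G)\ge 0$ for your tilted measure $dG/dP=e^{g}/\bbE_P[e^{g}]$. What your version adds is the bookkeeping the paper skips: the tilted measure avoids the paper's division by $r$ on $\{r=0\}$, and your Young's-inequality bound on $\bbE_Q[g^+]$, together with the $\bbE_Q[g]=-\infty$ case, justifies splitting the integral into $\bbE_Q[g]$ and $\KL(Q\|P)$ without an $\infty-\infty$ expression.
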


\begin{proof}
Define the Radon--Nikodym derivative $r := \frac{dQ}{dP}$.
Then $\bbE_P[r]=1$ and
\[
\KL(Q\|P)=\bbE_Q\!\left[\log \frac{dQ}{dP}\right]=\bbE_P[r\log r].
\]
By Jensen's inequality applied to the convex function $\log$,
\[
\log \bbE_P[e^{g}]
=
\log \bbE_P\!\left[r\cdot \frac{e^{g}}{r}\right]
\ge
\bbE_P\!\left[r \log \frac{e^{g}}{r}\right]
=
\bbE_Q[g] - \bbE_P[r\log r].
\]
Rearranging gives \eqref{eq:dv}.
\end{proof}

\subsubsection{A standard sub-Gaussian mgf bound}
\begin{lemma}[Sub-Gaussian mgf bound]
\label{lem:subg_mgf}
Let $X$ be $\sigma^2$-sub-Gaussian, meaning
$\log \bbE[\exp(\lambda(X-\bbE[X]))] \le \tfrac{\lambda^2\sigma^2}{2}$ for all $\lambda\in\mathbb{R}$.
Then for any $\lambda\in\mathbb{R}$,
\begin{equation}
\log \bbE[e^{\lambda X}]
\le
\lambda \bbE[X] + \frac{\lambda^2\sigma^2}{2}.
\label{eq:subg_mgf}
\end{equation}
\end{lemma}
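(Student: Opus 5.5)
The statement is essentially a rewriting of the definition, so the plan is to expand the cumulant generating function around the mean and insert the sub-Gaussian assumption. Set $\mu:=\bbE[X]$, which is finite because sub-Gaussian variables are integrable. Then write $e^{\lambda X}=e^{\lambda\mu}\,e^{\lambda(X-\mu)}$ pointwise. Since $e^{\lambda\mu}$ is a deterministic constant, it factors out of the expectation:
\begin{equation}
\bbE[e^{\lambda X}] = e^{\lambda\mu}\,\bbE\big[e^{\lambda(X-\mu)}\big].
\end{equation}

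Next, take logarithms of both sides. Both factors are positive, and the second is finite by the sub-Gaussian hypothesis, so this gives
\begin{equation}
\log\bbE[e^{\lambda X}] = \lambda\mu + \log\bbE\big[e^{\lambda(X-\mu)}\big].
\end{equation}
Applying the assumed bound $\log\bbE[e^{\lambda(X-\mu)}]\le \lambda^2\sigma^2/2$ to the second term yields \eqref{eq:subg_mgf} for every $\lambda\in\bbR$. The argument does not depend on the sign of $\lambda$, which matters because the lemma will later be used with $\lambda>0$ in Theorem~\ref{thm:subgaussian_bound} and possibly with negative arguments elsewhere.

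No step here poses a real obstacle. The one point needing care is notational. The Notation subsection states sub-Gaussianity as a bound on the moment generating function itself, namely $\bbE[\exp(\lambda(X-\bbE X))]\le\lambda^2\sigma^2/2$, which appears to be a typo. The lemma instead uses the standard log-mgf form. I would follow the lemma's convention and note that this is the intended definition throughout.
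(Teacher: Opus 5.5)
Your proof is correct and matches the paper's argument: factor the constant $e^{\lambda\bbE[X]}$ out of the expectation, take logarithms, and apply the log-mgf bound on the centered variable for arbitrary $\lambda\in\bbR$. You are also right that the definition in the Notation subsection drops the logarithm; the lemma's log-mgf form is the intended definition.
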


\begin{proof}
By definition,
\[
\bbE[e^{\lambda X}]
=
\bbE\!\left[e^{\lambda (X-\bbE[X])}\right]\cdot e^{\lambda \bbE[X]}.
\]
Taking logs and applying the sub-Gaussian condition gives \eqref{eq:subg_mgf}.
\end{proof}

\subsection{Proof of Theorem 1}

\subsubsection{Statement}
\begin{theorem}[Distillation generalization upper bound]
\label{thm:upper}
Assume that $h(D_n,f_T)$ is $\sigma^2$-sub-Gaussian under $\Delta_{D_n,f_T}$.
Then
\begin{equation}
\mathrm{gen}_S \le \mathrm{gen}_T + \sigma\sqrt{2\mathsf{K}_n}.
\label{eq:upper_goal}
\end{equation}
\end{theorem}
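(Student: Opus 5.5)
The plan is to apply the Donsker--Varadhan inequality (Lemma~\ref{lem:dv}) with $Q=\Delta_{\hatD_n,f_S}$ and $P=\Delta_{D_n,f_T}$, and test function $g(d,f)=\lambda h(d,f)=\lambda\big(L_P(f)-L_d(f)\big)$ for a free parameter $\lambda>0$. First I would check that the change of measure is legitimate, i.e.\ $Q\ll P$ on $\calZ^n\times\calF$. If $Q\not\ll P$, then $\mathsf{K}_n=\infty$ and the bound is trivial, so there is no loss in assuming $\mathsf{K}_n<\infty$; this gives absolute continuity. Integrability, $\E_P[e^{g}]<\infty$, follows directly from the sub-Gaussian assumption.

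Under $Q$, the expectation of $g$ is exactly $\lambda\,\gen_S$, since $\gen_S$ is defined as $\E_{\Delta_{\hatD_n,f_S}}[L_P(f_S)-L_{\hatD_n}(f_S)]$. The function $h$ is the same measurable map evaluated on the pair $(\hatD_n,f_S)$ rather than $(D_n,f_T)$. Lemma~\ref{lem:dv} then gives
\begin{equation}
\lambda\,\gen_S\le \mathsf{K}_n+\log \E_{\Delta_{D_n,f_T}}\big[e^{\lambda h(D_n,f_T)}\big].
\end{equation}
Next I apply Lemma~\ref{lem:subg_mgf} to $X=h(D_n,f_T)$, whose mean is $\gen_T$. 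This bounds the log-mgf by $\lambda\gen_T+\lambda^2\sigma^2/2$. Dividing by $\lambda>0$ yields
\begin{equation}
\gen_S\le \gen_T+\frac{\mathsf{K}_n}{\lambda}+\frac{\lambda\sigma^2}{2}\qquad\text{for all }\lambda>0.
\end{equation}

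Finally I minimize the right-hand side over $\lambda$. The choice $\lambda^\star=\sqrt{2\mathsf{K}_n}/\sigma$ balances the two terms and gives $\sigma\sqrt{2\mathsf{K}_n}$, which proves \eqref{eq:upper_goal}. The case $\mathsf{K}_n=0$ needs separate treatment, because then $\lambda^\star=0$ is not admissible. In that case I let $\lambda\to 0^+$ instead, which yields $\gen_S\le\gen_T$. The case $\sigma=0$ is handled the same way, by letting $\lambda\to\infty$.

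The main obstacle is not the computation itself. It is making sure the DV step is applied on the correct joint space, with the same function $h$ evaluated under both processes, and that the measurability and absolute-continuity conditions hold. The standing assumption $\Delta_{f_S|\hatD_n}(\cdot|d)\ll\Delta_{f_T|D_n}(\cdot|d)$ covers only the algorithm kernel. Absolute continuity of the dataset marginals, $\Delta_{\hatD_n}\ll\Delta_{D_n}$, must come from the finiteness of $\mathsf{K}_n$, as described in the first step.
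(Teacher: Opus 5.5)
Your proposal is correct and matches the paper's proof: the paper also applies Donsker--Varadhan on the joint dataset--model space with $g=\lambda h$, bounds the log-mgf via the sub-Gaussian lemma, and optimizes at $\lambda^\star=\sqrt{2\mathsf{K}_n}/\sigma$. Your extra treatment of the degenerate cases ($\mathsf{K}_n=0$, $\sigma=0$, and $Q\not\ll P$ giving $\mathsf{K}_n=\infty$) is a small rigor addition that the paper leaves implicit.
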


\subsubsection{Proof}
\begin{proof}
Fix any $\lambda>0$ and choose in Lemma~\ref{lem:dv}
\[
P = \Delta_{D_n,f_T}, \qquad Q = \Delta_{\hat{D}_n,f_S}, \qquad g(d,f)=\lambda\, h(d,f).
\]
Then \eqref{eq:dv} gives
\begin{equation}
\bbE_{\Delta_{\hat{D}_n,f_S}}[\lambda h(\hat{D}_n,f_S)]
\le
\KL(\Delta_{\hat{D}_n,f_S}\|\Delta_{D_n,f_T})
+
\log \bbE_{\Delta_{D_n,f_T}}[e^{\lambda h(D_n,f_T)}].
\label{eq:dv_applied}
\end{equation}
By definitions, the left side equals $\lambda \mathrm{gen}_S$ and the KL term equals $\mathsf{K}_n$, so
\begin{equation}
\lambda \mathrm{gen}_S
\le
\mathsf{K}_n
+
\log \bbE_{\Delta_{D_n,f_T}}[e^{\lambda h(D_n,f_T)}].
\label{eq:dv_applied_simplified}
\end{equation}

Since $h(D_n,f_T)$ is $\sigma^2$-sub-Gaussian with mean $\mathrm{gen}_T$,
Lemma~\ref{lem:subg_mgf} yields
\begin{equation}
\log \bbE_{\Delta_{D_n,f_T}}[e^{\lambda h(D_n,f_T)}]
\le
\lambda \mathrm{gen}_T + \frac{\lambda^2\sigma^2}{2}.
\label{eq:mgf_bound}
\end{equation}

Substitute \eqref{eq:mgf_bound} into \eqref{eq:dv_applied_simplified}:
\[
\lambda \mathrm{gen}_S
\le
\mathsf{K}_n + \lambda \mathrm{gen}_T + \frac{\lambda^2\sigma^2}{2}.
\]
Rearrange:
\begin{equation}
\mathrm{gen}_S - \mathrm{gen}_T
\le
\frac{\mathsf{K}_n}{\lambda} + \frac{\lambda\sigma^2}{2}.
\label{eq:opt_lambda_obj}
\end{equation}

Consider the function $\phi(\lambda)=\frac{\mathsf{K}_n}{\lambda}+\frac{\lambda\sigma^2}{2}$ for $\lambda>0$.
Differentiate:
\[
\phi'(\lambda) = -\frac{\mathsf{K}_n}{\lambda^2} + \frac{\sigma^2}{2}.
\]
Set $\phi'(\lambda)=0$:
\[
-\frac{\mathsf{K}_n}{\lambda^2} + \frac{\sigma^2}{2}=0
\quad\Longrightarrow\quad
\lambda^2 = \frac{2\mathsf{K}_n}{\sigma^2}
\quad\Longrightarrow\quad
\lambda^\star = \frac{\sqrt{2\mathsf{K}_n}}{\sigma}.
\]
Plugging $\lambda^\star$ into \eqref{eq:opt_lambda_obj} gives
\[
\mathrm{gen}_S - \mathrm{gen}_T
\le
\frac{\mathsf{K}_n}{\sqrt{2\mathsf{K}_n}/\sigma} + \frac{(\sqrt{2\mathsf{K}_n}/\sigma)\sigma^2}{2}
=
\sigma\sqrt{\frac{\mathsf{K}_n}{2}} + \sigma\sqrt{\frac{\mathsf{K}_n}{2}}
=
\sigma\sqrt{2\mathsf{K}_n}.
\]
This is exactly \eqref{eq:upper_goal}.
\end{proof}


\subsection{Proof of Proposition 1}

\subsubsection{Statement}
\begin{proposition}[Sub-Gaussianity via stability]
\label{prop:stability}
Assume the loss is bounded: $\ell(f,z)\in[a,b]$ for all $f,z$.
Assume the teacher algorithm $A_T$ is $\beta$-uniformly stable:
for any neighboring datasets $D_n$ and $D_n^{(i)}$ differing in one example,
\[
\sup_{z\in\calZ}\big|\ell(A_T(D_n),z)-\ell(A_T(D_n^{(i)}),z)\big|\le \beta.
\]
Let
\[
H(D_n):=L_P(A_T(D_n)) - L_{D_n}(A_T(D_n)).
\]
Then $H(D_n)$ satisfies bounded differences with constants
\begin{equation}
c_i = 2\beta + \frac{b-a}{n}\quad \text{for all } i\in[n],
\label{eq:ci}
\end{equation}
and hence is sub-Gaussian with variance proxy
\begin{equation}
\sigma^2=\frac{1}{4}\sum_{i=1}^n c_i^2
=
\frac{n}{4}\left(2\beta+\frac{b-a}{n}\right)^2.
\label{eq:sigma_stability}
\end{equation}
\end{proposition}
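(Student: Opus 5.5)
The plan is to verify the bounded-difference property of $H$ directly and then invoke the standard McDiarmid/Hoeffding-lemma route to sub-Gaussianity. We treat $A_T$ as a deterministic map, so $H$ is a function of $D_n=(Z_1,\dots,Z_n)$ alone. If $A_T$ is randomized, we condition on its internal randomness, or assume the stability holds pointwise in it.

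\textbf{Step 1 (bounded differences).} Fix $i$ and let $D_n^{(i)}$ replace $Z_i$ by $Z_i'$. Write $f=A_T(D_n)$ and $f'=A_T(D_n^{(i)})$, and split
\begin{equation}
H(D_n)-H(D_n^{(i)}) = \big(L_P(f)-L_P(f')\big) - \big(L_{D_n}(f)-L_{D_n^{(i)}}(f')\big).
\end{equation}
The population term is at most $\beta$ in absolute value, since $|\ell(f,z)-\ell(f',z)|\le\beta$ for every $z$, and averaging over $z\sim P$ preserves this. For the empirical term we add and subtract $L_{D_n^{(i)}}(f)$. The piece $L_{D_n^{(i)}}(f)-L_{D_n^{(i)}}(f')$ is again at most $\beta$ by uniform stability. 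The piece $L_{D_n}(f)-L_{D_n^{(i)}}(f)=\frac1n(\ell(f,Z_i)-\ell(f,Z_i'))$ is at most $\frac{b-a}{n}$ by boundedness. Summing gives $|H(D_n)-H(D_n^{(i)})|\le 2\beta+\frac{b-a}{n}=c_i$.

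\textbf{Step 2 (sub-Gaussianity).} Rather than stopping at McDiarmid's tail bound, we establish the MGF statement that the paper's definition requires. Form the Doob martingale $M_k=\bbE[H\mid Z_1,\dots,Z_k]$, with increments $V_k=M_k-M_{k-1}$. Since the $Z_i$ are independent, each $V_k$, conditionally on $Z_{1:k-1}$, lies in an interval of length at most $c_k$. This interval is $[\inf_z g_k(z),\sup_z g_k(z)]$ for $g_k(z)=\bbE[H\mid Z_{1:k-1},Z_k=z]-M_{k-1}$, and its width is bounded by $c_k$ via the bounded-difference property.

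Hoeffding's lemma then gives $\bbE[e^{\lambda V_k}\mid Z_{1:k-1}]\le e^{\lambda^2c_k^2/8}$. Iterating through the tower property yields
\begin{equation}
\log\bbE[e^{\lambda(H-\bbE H)}]\le \frac{\lambda^2}{8}\sum_i c_i^2=\frac{\lambda^2\sigma^2}{2}\quad\text{with}\quad \sigma^2=\frac14\sum_i c_i^2 .
\end{equation}
This is exactly the claimed proxy $\frac n4\big(2\beta+\frac{b-a}{n}\big)^2$.

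\textbf{Main obstacle.} The only delicate point is Step 2's claim that the conditional range of $V_k$ has width at most $c_k$, rather than $2c_k$. This needs the sup/inf formulation over the replaced coordinate: bounded differences bound $\sup_z g_k-\inf_z g_k$ directly, after integrating out $Z_{k+1:n}$ by independence.

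A minor side issue is the measurability of $H$ and the integrability needed for the conditional expectations. Both follow from $\ell$ being bounded, since then $|H|\le b-a$.
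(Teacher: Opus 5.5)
Your proof is correct and follows the paper's skeleton: bounded differences with $c_i=2\beta+\frac{b-a}{n}$, then McDiarmid-type concentration. Step~1 differs only in bookkeeping. The paper splits the empirical average into the $j\neq i$ terms, each at most $\beta$, and the $j=i$ term, at most $\beta+(b-a)$ by a triangle inequality through $\ell(f',z_i)$; you add and subtract $L_{D_n^{(i)}}(f)$. Both give $\beta+\frac{b-a}{n}$ for the empirical part.

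The substantive difference is Step~2. The paper stops at McDiarmid's tail bound $\exp(-2t^2/\sum_i c_i^2)$ and appeals to a ``standard equivalence'' between such tails and the mgf to read off $\sigma^2=\frac14\sum_i c_i^2$. That conversion holds only up to absolute constants. For example, a symmetric $\pm c$ variable $X$ satisfies $\bbP(\pm X\ge t)\le e^{-t^2/(2s^2)}$ with $s^2=c^2/(2\log 2)<c^2$, yet $\log\E[e^{\lambda X}]\le\lambda^2 s^2/2$ for all $\lambda$ forces $s^2\ge c^2$. So the exact constant must come from the mgf form of McDiarmid, and that is what your Doob-martingale/Hoeffding-lemma argument proves directly.

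Your version is therefore self-contained and yields exactly the log-mgf inequality used in the proof of Theorem~\ref{thm:subgaussian_bound}. The paper's version is shorter but, read literally, does not justify the stated proxy. You also handle the delicate point correctly: the conditional range has width $c_k$ because you bound $\sup_z g_k-\inf_z g_k$ after integrating out $Z_{k+1:n}$ by independence.

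One caution on your aside about randomized $A_T$. Conditioning on the internal randomness $\xi$ controls $H-\E[H\mid\xi]$, not $H-\E[H]$. So it does not give the same proxy under the joint process unless $\E[H\mid\xi]$ is constant in $\xi$. Since the proposition treats $A_T$ as a deterministic map, this does not affect your proof.
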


\subsubsection{Proof}
\begin{proof}
Let $D_n=(z_1,\dots,z_n)$ and let $D_n^{(i)}=(z_1,\dots,z_{i-1},z_i',z_{i+1},\dots,z_n)$
be a neighboring dataset. Define
\[
f := A_T(D_n),\qquad f':=A_T(D_n^{(i)}).
\]
Then
\[
H(D_n)=\bbE_{Z\sim P}[\ell(f,Z)] - \frac{1}{n}\sum_{j=1}^n \ell(f,z_j),
\]
\[
H(D_n^{(i)})=\bbE_{Z\sim P}[\ell(f',Z)] - \frac{1}{n}\sum_{j=1}^n \ell(f',z_j^{(i)}),
\]
where $z_j^{(i)}=z_j$ for $j\neq i$ and $z_i^{(i)}=z_i'$.

By uniform stability and taking expectation over $Z\sim P$,
\begin{equation}
\Big|\bbE_{Z\sim P}[\ell(f,Z)] - \bbE_{Z\sim P}[\ell(f',Z)]\Big|
\le
\bbE_{Z\sim P}\big[|\ell(f,Z)-\ell(f',Z)|\big]
\le
\beta.
\label{eq:pop_diff}
\end{equation}

Consider
\[
\left|\frac{1}{n}\sum_{j=1}^n \ell(f,z_j) - \frac{1}{n}\sum_{j=1}^n \ell(f',z_j^{(i)})\right|
\le
\frac{1}{n}\sum_{j=1}^n \big|\ell(f,z_j)-\ell(f',z_j^{(i)})\big|.
\]
Now split the sum into $j\neq i$ and $j=i$.

For $j\neq i$, we have $z_j^{(i)}=z_j$, so by stability
\begin{equation}
|\ell(f,z_j)-\ell(f',z_j)|\le \beta.
\label{eq:emp_diff_jneqi}
\end{equation}

For $j=i$, we use triangle inequality:
\[
|\ell(f,z_i)-\ell(f',z_i')|
\le
|\ell(f,z_i)-\ell(f',z_i)| + |\ell(f',z_i)-\ell(f',z_i')|.
\]
The first term is bounded by $\beta$ by stability, and the second term is bounded by $(b-a)$ by loss boundedness.
Thus
\begin{equation}
|\ell(f,z_i)-\ell(f',z_i')|\le \beta + (b-a).
\label{eq:emp_diff_i}
\end{equation}

Combine \eqref{eq:emp_diff_jneqi} and \eqref{eq:emp_diff_i}:
\[
\left|\frac{1}{n}\sum_{j=1}^n \ell(f,z_j) - \frac{1}{n}\sum_{j=1}^n \ell(f',z_j^{(i)})\right|
\le
\frac{(n-1)\beta + (\beta+(b-a))}{n}
=
\beta+\frac{b-a}{n}.
\]
So
\begin{equation}
\Big|L_{D_n}(f) - L_{D_n^{(i)}}(f')\Big|
\le
\beta+\frac{b-a}{n}.
\label{eq:emp_diff}
\end{equation}

Using \eqref{eq:pop_diff} and \eqref{eq:emp_diff},
\[
|H(D_n)-H(D_n^{(i)})|
\le
\underbrace{\big|L_P(f)-L_P(f')\big|}_{\le \beta}
+
\underbrace{\big|L_{D_n}(f)-L_{D_n^{(i)}}(f')\big|}_{\le \beta+(b-a)/n}
\le
2\beta+\frac{b-a}{n}.
\]
This establishes the bounded-differences constants \eqref{eq:ci}.

By McDiarmid's inequality, for all $t>0$,
\[
\bbP\big(H(D_n)-\bbE[H(D_n)]\ge t\big)
\le
\exp\!\left(-\frac{2t^2}{\sum_{i=1}^n c_i^2}\right).
\]
A standard equivalence between McDiarmid tails and sub-Gaussian mgf yields that
$H(D_n)$ is sub-Gaussian with variance proxy $\sigma^2=\frac{1}{4}\sum_{i=1}^n c_i^2$.
With constant $c_i$ from \eqref{eq:ci}, this gives \eqref{eq:sigma_stability}.
\end{proof}

\subsection{Proof of Theorem 2}

\subsubsection{Central condition definition}
\begin{definition}[$(\eta,c)$-central condition]
A random variable $X$ satisfies the $(\eta,c)$-central condition under $P$
if $\eta>0$ and $0<c\le 1$ and
\begin{equation}
\log \bbE_{P}[e^{-\eta X}] \le -c\eta\, \bbE_{P}[X].
\label{eq:central_def}
\end{equation}
\end{definition}

\subsubsection{Statement}
\begin{theorem}[Distillation generalization lower bound]
\label{thm:lower}
Assume $h(D_n,f_T)$ satisfies the $(\eta,c)$-central condition under $\Delta_{D_n,f_T}$.
Then
\begin{equation}
\mathrm{gen}_S \ge c\cdot \mathrm{gen}_T - \frac{1}{\eta}\mathsf{K}_n.
\label{eq:lower_goal}
\end{equation}
\end{theorem}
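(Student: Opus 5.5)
The plan is to mirror the proof of Theorem~\ref{thm:upper}, but apply the Donsker--Varadhan inequality (Lemma~\ref{lem:dv}) to the negated gap. Take
\[
P=\Delta_{D_n,f_T},\qquad Q=\Delta_{\hat D_n,f_S},\qquad g(d,f)=-\eta\, h(d,f).
\]
The hypotheses of Lemma~\ref{lem:dv} are in place: absolute continuity $Q\ll P$ is implicit whenever $\mathsf{K}_n<\infty$ (otherwise the bound is trivial). The central condition gives $\log\bbE_P[e^{-\eta h}]\le -c\eta\,\mathrm{gen}_T<\infty$, so the integrability requirement $\bbE_P[e^{g}]<\infty$ holds.

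The DV inequality then reads
\[
-\eta\,\bbE_{\Delta_{\hat D_n,f_S}}[h(\hat D_n,f_S)]\le \mathsf{K}_n+\log\bbE_{\Delta_{D_n,f_T}}\!\big[e^{-\eta h(D_n,f_T)}\big].
\]
The left side is exactly $-\eta\,\mathrm{gen}_S$, since $\mathrm{gen}_S$ is the $Q$-expectation of $h$. Next, invoke the $(\eta,c)$-central condition \eqref{eq:central_def} for $X=h(D_n,f_T)$ under $P$, using $\bbE_P[h]=\mathrm{gen}_T$, to bound the log-mgf term by $-c\eta\,\mathrm{gen}_T$. This yields
\[
-\eta\,\mathrm{gen}_S\le \mathsf{K}_n-c\eta\,\mathrm{gen}_T.
\]
Dividing by $\eta>0$ and rearranging gives \eqref{eq:lower_goal}.

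Unlike the sub-Gaussian upper bound, no optimization over a free parameter is needed, because $\eta$ is fixed by the assumption. That is why the dependence on $\mathsf{K}_n$ is linear rather than square-root.

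There is no real obstacle. The only points needing care are the sign bookkeeping when negating $h$, and confirming that $\mathrm{gen}_S$ (defined on the pseudo-data $\hat D_n$) is precisely $\bbE_Q[h]$, with $h$ evaluated at $(\hat D_n,f_S)$. Both follow directly from the definitions in Section~\ref{subsec:risk_defs}.
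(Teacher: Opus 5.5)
Your proposal is correct and follows the paper's proof essentially line by line: you apply Donsker--Varadhan with $g=-\eta h$, bound the log-mgf by $-c\eta\,\mathrm{gen}_T$ via the central condition, and divide by $\eta>0$. Your explicit checks that $Q\ll P$ (otherwise $\mathsf{K}_n=\infty$ and the bound is trivial) and that $\mathbb{E}_P[e^{-\eta h}]<\infty$ are small refinements the paper leaves implicit.
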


\subsubsection{Proof}
\begin{proof}
Apply Lemma~\ref{lem:dv} with
\[
P=\Delta_{D_n,f_T},\quad Q=\Delta_{\hat{D}_n,f_S},\quad g(d,f)=-\eta\, h(d,f).
\]
Then
\begin{equation}
\bbE_{\Delta_{\hat{D}_n,f_S}}[-\eta\, h(\hat{D}_n,f_S)]
\le
\mathsf{K}_n
+
\log \bbE_{\Delta_{D_n,f_T}}[e^{-\eta\, h(D_n,f_T)}].
\label{eq:dv_negative}
\end{equation}

By assumption, $h(D_n,f_T)$ satisfies \eqref{eq:central_def} with $X=h(D_n,f_T)$ and $P=\Delta_{D_n,f_T}$:
\begin{equation}
\log \bbE_{\Delta_{D_n,f_T}}[e^{-\eta\, h(D_n,f_T)}]
\le
-c\eta\, \bbE_{\Delta_{D_n,f_T}}[h(D_n,f_T)]
=
-c\eta\, \mathrm{gen}_T.
\label{eq:central_used}
\end{equation}

Substitute \eqref{eq:central_used} into \eqref{eq:dv_negative}:
\[
-\eta\, \bbE_{\Delta_{\hat{D}_n,f_S}}[h(\hat{D}_n,f_S)]
\le
\mathsf{K}_n - c\eta\, \mathrm{gen}_T.
\]
Divide by $-\eta$ (note $\eta>0$ flips the inequality):
\[
\bbE_{\Delta_{\hat{D}_n,f_S}}[h(\hat{D}_n,f_S)]
\ge
c\,\mathrm{gen}_T - \frac{1}{\eta}\mathsf{K}_n.
\]
Since the left side is exactly $\mathrm{gen}_S$, we obtain \eqref{eq:lower_goal}.
\end{proof}

\subsubsection{Sub-Gaussianity implies a central condition for small $\eta$}
\label{subsubsec:subGau2central}
\begin{remark}[Deriving a valid $(\eta,c)$ from sub-Gaussianity]
\label{rem:subg_to_central}
Assume $X$ is $\sigma^2$-sub-Gaussian with mean $\mu=\bbE[X]>0$.
Then for any $\eta>0$,
\[
\log \bbE[e^{-\eta X}]
\le
-\eta \mu + \frac{\eta^2\sigma^2}{2}
=
-\eta\mu\left(1-\frac{\eta\sigma^2}{2\mu}\right).
\]
Thus $X$ satisfies the $(\eta,c)$-central condition with
\[
c \le 1-\frac{\eta\sigma^2}{2\mu}
\quad \text{provided that} \quad
0<\eta<\frac{2\mu}{\sigma^2}.
\]
\end{remark}


\subsection{Linear Gaussian Case Study (Detailed KL Decomposition)}

\subsubsection{Matrix normal definition and basic identities}
\begin{definition}[Matrix normal distribution]
A random matrix $A\in\mathbb{R}^{k\times n}$ follows a matrix normal distribution
$A\sim \mathcal{MN}(M, U, V)$ if
\begin{equation}
\mathrm{vec}(A) \sim \mathcal{N}(\mathrm{vec}(M),\, V\otimes U),
\label{eq:mn_def}
\end{equation}
where $M\in\mathbb{R}^{k\times n}$, $U\in\mathbb{R}^{k\times k}$, $V\in\mathbb{R}^{n\times n}$.
\end{definition}

We use the vectorization identity
\begin{equation}
\mathrm{vec}(W X) = (X^\top \otimes I_k)\,\mathrm{vec}(W),
\label{eq:vec_WX}
\end{equation}
valid for $W\in\mathbb{R}^{k\times d}$ and $X\in\mathbb{R}^{d\times n}$.

For Gaussians with the same covariance, we use
\begin{equation}
\KL(\mathcal{N}(m_1,\Sigma)\,\|\,\mathcal{N}(m_0,\Sigma))
=\frac{1}{2}(m_1-m_0)^\top \Sigma^{-1}(m_1-m_0).
\label{eq:gauss_samecov_kl}
\end{equation}

\subsubsection{Generative model}
Collect features and labels column-wise into
$X\in\mathbb{R}^{d\times n}$ and $Y\in\mathbb{R}^{k\times n}$ so that $D_n=(X,Y)$.
Assume a noisy linear label channel
\begin{equation}
Y\mid X \sim \mathcal{MN}(W_\star X,\ I_k,\ \nu^2 I_n),
\label{eq:lin_gen}
\end{equation}
where $W_\star\in\mathbb{R}^{k\times d}$ is the ground-truth linear map and $\nu>0$ is the noise level.

\subsubsection{Teacher as a Gibbs learner (closed form)}
Let the teacher parameter be $W\in\mathbb{R}^{k\times d}$ with prior
\begin{equation}
p_0(W)=\mathcal{MN}(0,\ I_k,\ \lambda^{-1}I_d).
\label{eq:prior}
\end{equation}
Given $D_n=(X,Y)$, define the Gibbs posterior with inverse temperature $\beta_T$:
\begin{equation}
q_T(W\mid D_n)\propto p_0(W)\exp\!\left(-\frac{\beta_T}{2\nu^2}\|Y-WX\|_F^2\right).
\label{eq:gibbs_teacher}
\end{equation}

\begin{lemma}[Closed form of $q_T$]
\label{lem:qT_closed}
The posterior is matrix normal:
\begin{equation}
q_T(W\mid D_n)=\mathcal{MN}(\bar W_T,\ I_k,\ \Sigma_T),
\qquad
\Sigma_T=\Big(\lambda I_d+\frac{\beta_T}{\nu^2}XX^\top\Big)^{-1},
\qquad
\bar W_T=\frac{\beta_T}{\nu^2}YX^\top \Sigma_T.
\label{eq:qT_closed}
\end{equation}
\end{lemma}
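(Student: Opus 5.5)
The plan is to complete the square in $W$ inside the exponent of the unnormalized log-density and then recognize the resulting quadratic form as that of a matrix normal. Up to additive constants independent of $W$,
\begin{equation}
-\log q_T(W\mid D_n) = \frac{\lambda}{2}\mathrm{tr}(WW^\top) + \frac{\beta_T}{2\nu^2}\|Y-WX\|_F^2 + \mathrm{const}.
\end{equation}
Here the first term is the prior density \eqref{eq:prior}: its row covariance is $I_k$ and its column covariance is $\lambda^{-1}I_d$, so its exponent is $-\frac12\mathrm{tr}\big((\lambda^{-1}I_d)^{-1}W^\top W\big)=-\frac{\lambda}{2}\mathrm{tr}(W^\top W)$.

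Next I expand $\|Y-WX\|_F^2=\mathrm{tr}(YY^\top)-2\,\mathrm{tr}(WXY^\top)+\mathrm{tr}(WXX^\top W^\top)$ and collect the terms in $W$:
\begin{equation}
\frac12\mathrm{tr}\Big(W\big(\lambda I_d+\tfrac{\beta_T}{\nu^2}XX^\top\big)W^\top\Big)-\frac{\beta_T}{\nu^2}\mathrm{tr}(WXY^\top)+\mathrm{const}.
\end{equation}
Setting $\Sigma_T^{-1}:=\lambda I_d+\frac{\beta_T}{\nu^2}XX^\top$, which is positive definite because $\lambda>0$, and $\bar W_T:=\frac{\beta_T}{\nu^2}YX^\top\Sigma_T$, I verify the identity
\begin{equation}
\frac12\mathrm{tr}\big((W-\bar W_T)\Sigma_T^{-1}(W-\bar W_T)^\top\big)=\frac12\mathrm{tr}(W\Sigma_T^{-1}W^\top)-\mathrm{tr}(W\Sigma_T^{-1}\bar W_T^\top)+\mathrm{const}.
\end{equation}
The cross term matches because $\Sigma_T^{-1}\bar W_T^\top=\frac{\beta_T}{\nu^2}\Sigma_T^{-1}\Sigma_T XY^\top=\frac{\beta_T}{\nu^2}XY^\top$, using the symmetry of $\Sigma_T$.

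The final step is to match this with the matrix normal density from \eqref{eq:mn_def}. For $\mathcal{MN}(M,U,V)$ the log-density is $-\frac12\mathrm{tr}\big(V^{-1}(W-M)^\top U^{-1}(W-M)\big)$ plus a constant, via $(V\otimes U)^{-1}=V^{-1}\otimes U^{-1}$ and $\mathrm{vec}(A)^\top(V^{-1}\otimes U^{-1})\mathrm{vec}(A)=\mathrm{tr}(V^{-1}A^\top U^{-1}A)$. Taking $U=I_k$ and $V=\Sigma_T$, cyclicity of the trace makes this agree with the completed square above. Normalization then forces $q_T=\mathcal{MN}(\bar W_T,I_k,\Sigma_T)$.

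The main obstacle is bookkeeping rather than substance. One has to keep the row/column convention consistent: the row covariance $I_k$ acts on the $k$ output dimensions and $\Sigma_T$ is $d\times d$ on the input side. One also has to check the transposes in the cross term. A useful cross-check is to vectorize with \eqref{eq:vec_WX}. This gives the precision $\lambda I_{kd}+\frac{\beta_T}{\nu^2}(XX^\top\otimes I_k)=\Sigma_T^{-1}\otimes I_k$, which again confirms the Kronecker structure $\Sigma_T\otimes I_k$.
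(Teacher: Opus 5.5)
Your proof is correct and is essentially the paper's argument: both complete the square in the Gaussian exponent. The paper works in vectorized coordinates, identifying the precision $\lambda I_{kd}+\frac{\beta_T}{\nu^2}(XX^\top\otimes I_k)=\Sigma_T^{-1}\otimes I_k$, which is exactly your cross-check, while you do the same computation in trace form and additionally verify the mean explicitly via $\Sigma_T^{-1}\bar W_T^\top=\frac{\beta_T}{\nu^2}XY^\top$, a step the paper only asserts as ``the corresponding linear term mapped back to matrix form.''
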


\begin{proof}
Let $w=\mathrm{vec}(W)$, $y=\mathrm{vec}(Y)$, and $B_X:=X^\top\otimes I_k$.
Then $\mathrm{vec}(WX)=B_X w$ and $\|Y-WX\|_F^2=\|y-B_X w\|_2^2$.
The prior implies $w\sim\mathcal{N}(0,\lambda^{-1}I_{kd})$.
Hence the (unnormalized) log density is quadratic in $w$ with precision
\[
\lambda I_{kd}+\frac{\beta_T}{\nu^2}B_X^\top B_X
=
\lambda I_{kd}+\frac{\beta_T}{\nu^2}(XX^\top\otimes I_k),
\]
so the covariance is $(\lambda I_d+\frac{\beta_T}{\nu^2}XX^\top)^{-1}\otimes I_k=\Sigma_T\otimes I_k$.
The mean is the corresponding linear term mapped back to matrix form, yielding
$\bar W_T=\frac{\beta_T}{\nu^2}YX^\top\Sigma_T$.
\end{proof}

\subsubsection{Pseudo-data generation}
Sample $W_T\sim q_T(\cdot\mid D_n)$ and generate pseudo labels through the same noisy channel:
\begin{equation}
\hat Y \mid (W_T,X) \sim \mathcal{MN}(W_T X,\ I_k,\ \nu^2 I_n),
\label{eq:pseudo_labels}
\end{equation}
then define $\hat D_n=(X,\hat Y)$.

\subsubsection{Student capacity constraint via a rank bottleneck}
Let the student parameter be $\Theta\in\mathbb{R}^{k\times d}$.
Introduce the rank-$\kappa$ map $M^\star(W_T,X)$ as a best rank-$\kappa$ approximation in prediction space:
\begin{equation}
M^\star(W_T,X) \in \arg\min_{M:\,\mathrm{rank}(M)=\kappa}\ \|W_T X - W_T M X\|_F^2.
\label{eq:Mstar_def}
\end{equation}
Define a local Gaussian student conditional kernel
\begin{equation}
q_S(\Theta\mid W_T,X)=\mathcal{MN}(W_T M^\star(W_T,X),\ I_k,\ \Sigma_S),
\qquad \Sigma_S\succ 0.
\label{eq:qS_cond}
\end{equation}

\subsubsection{Process-level KL and its two terms}
Define the distillation divergence
\[
\mathsf{K}_n := \KL(\Delta_{\hat D_n,f_S}\,\|\,\Delta_{D_n,f_T}).
\]
Using the KL chain rule on the dataset-model pair,
\begin{equation}
\mathsf{K}_n
=
\underbrace{\KL(\Delta_{\hat D_n}\,\|\,\Delta_{D_n})}_{\text{Dataset shift}}
+
\underbrace{\mathbb{E}_{\hat D_n}\!\Big[\KL(\Delta_{f_S\mid \hat D_n}\,\|\,\Delta_{f_T\mid D_n})\Big]}_{\text{Algorithm shift}}.
\label{eq:process_chain}
\end{equation}
We now bound the two terms.

\subsubsection{Dataset shift bound and bias-variance decomposition}

\paragraph{Step 1. Condition on $(X,D_n)$ and use convexity of KL.}
Given $X$ and $D_n$, $\hat Y\mid (X,D_n)$ is a mixture over $W_T\mid D_n$:
\[
\hat Y\mid (X,D_n)\sim \int q_T(W_T\mid D_n)\,\mathcal{MN}(W_T X,\ I_k,\ \nu^2 I_n)\,dW_T.
\]
The real label law (given $X$) is $\mathcal{MN}(W_\star X,I_k,\nu^2 I_n)$.
By convexity of KL in its first argument,
\begin{align}
\KL(\hat Y\mid X,D_n\ \|\ Y\mid X)
&\le
\mathbb{E}_{W_T\mid D_n}\Big[
\KL\big(\mathcal{MN}(W_T X,I_k,\nu^2 I_n)\ \|\ \mathcal{MN}(W_\star X,I_k,\nu^2 I_n)\big)
\Big].
\label{eq:data_convex_step}
\end{align}

\paragraph{Step 2. KL for equal-covariance matrix normals.}
Vectorization gives $\mathrm{vec}(\hat Y)\sim\mathcal{N}(\mathrm{vec}(W_T X),\nu^2 I_{kn})$
and $\mathrm{vec}(Y)\sim\mathcal{N}(\mathrm{vec}(W_\star X),\nu^2 I_{kn})$.
Using \eqref{eq:gauss_samecov_kl},
\begin{equation}
\KL\big(\mathcal{MN}(W_T X,I_k,\nu^2 I_n)\ \|\ \mathcal{MN}(W_\star X,I_k,\nu^2 I_n)\big)
=
\frac{1}{2\nu^2}\| (W_T-W_\star)X\|_F^2.
\label{eq:mn_samecov_kl}
\end{equation}

\paragraph{Step 3. Take expectations and split into bias and variance.}
Combine \eqref{eq:data_convex_step} and \eqref{eq:mn_samecov_kl}, then average over $D_n$:
\begin{equation}
\KL(\Delta_{\hat D_n}\,\|\,\Delta_{D_n})
\le
\frac{1}{2\nu^2}\mathbb{E}_{D_n}\mathbb{E}_{W_T\mid D_n}\Big[\|(W_T-W_\star)X\|_F^2\Big].
\label{eq:data_term_pre}
\end{equation}
Write $W_T-W_\star=(\bar W_T-W_\star)+(W_T-\bar W_T)$ and expand:
\[
\|(W_T-W_\star)X\|_F^2
=
\|(\bar W_T-W_\star)X\|_F^2
+\|(W_T-\bar W_T)X\|_F^2
+2\langle(\bar W_T-W_\star)X,(W_T-\bar W_T)X\rangle_F.
\]
Taking $\mathbb{E}_{W_T\mid D_n}$, the cross term is zero since
$\mathbb{E}[W_T-\bar W_T\mid D_n]=0$. Hence
\begin{equation}
\mathbb{E}_{W_T\mid D_n}\|(W_T-W_\star)X\|_F^2
=
\underbrace{\|(\bar W_T-W_\star)X\|_F^2}_{\mathrm{Bias}(D_n)}
+
\mathbb{E}_{W_T\mid D_n}\|(W_T-\bar W_T)X\|_F^2.
\label{eq:bias_var_split}
\end{equation}
Now use the row-wise property of $W_T\mid D_n\sim\mathcal{MN}(\bar W_T,I_k,\Sigma_T)$.
Each row has covariance $\Sigma_T$, so
\begin{equation}
\mathbb{E}_{W_T\mid D_n}\|(W_T-\bar W_T)X\|_F^2
=
k\,\mathrm{tr}(X^\top\Sigma_T X)
=:k\,\mathrm{Var}(D_n).
\label{eq:var_term}
\end{equation}
Plug \eqref{eq:bias_var_split}--\eqref{eq:var_term} into \eqref{eq:data_term_pre}:
\begin{equation}
\KL(\Delta_{\hat D_n}\,\|\,\Delta_{D_n})
\le
\mathbb{E}_{D_n}\left[\frac{1}{2\nu^2}\Big(\mathrm{Bias}(D_n)+k\,\mathrm{Var}(D_n)\Big)\right].
\label{eq:data_term_final}
\end{equation}

\subsubsection{Algorithm shift bound and the rank-bottleneck decomposition}

Define the (expected) algorithm-shift term
\[
\mathrm{KL}_{\mathrm{alg}}
:=
\mathbb{E}_{\hat D_n}\Big[\KL(\Delta_{f_S\mid \hat D_n}\,\|\,\Delta_{f_T\mid D_n})\Big].
\]
Since the student kernel is conditionally Gaussian given latent $W_T$ (and $X$),
$q_S(\Theta\mid \hat D_n)$ is generally a mixture over $W_T$.
By convexity of KL in the first argument, conditioning and then averaging yields the reduction
\begin{equation}
\mathrm{KL}_{\mathrm{alg}}
\le
\mathbb{E}_{D_n}\mathbb{E}_{W_T\mid D_n}\Big[
\KL\big(q_S(\cdot\mid W_T,X)\,\|\,q_T(\cdot\mid D_n)\big)
\Big].
\label{eq:KLalg_reduce}
\end{equation}

\paragraph{Step 1. Closed-form KL between matrix normals.}
We have
\[
q_T(W\mid D_n)=\mathcal{MN}(\bar W_T,\ I_k,\ \Sigma_T),
\qquad
q_S(\Theta\mid W_T,X)=\mathcal{MN}(W_T M^\star,\ I_k,\ \Sigma_S).
\]
Vectorize into $kd$-dimensional Gaussians with covariances $\Sigma_T\otimes I_k$ and $\Sigma_S\otimes I_k$.
Using the standard Gaussian KL formula and Kronecker identities gives
\begin{align}
\KL\big(q_S(\cdot\mid W_T,X)\,\|\,q_T(\cdot\mid D_n)\big)
&=
\frac{k}{2}\Big(\mathrm{tr}(\Sigma_T^{-1}\Sigma_S)-d+\log\frac{\det\Sigma_T}{\det\Sigma_S}\Big)
\nonumber\\
&\quad
+\frac{1}{2}\,\mathrm{tr}\!\Big((W_T M^\star-\bar W_T)\Sigma_T^{-1}(W_T M^\star-\bar W_T)^\top\Big).
\label{eq:mn_kl_general}
\end{align}
Define the covariance mismatch penalty
\begin{equation}
\mathrm{Cov}(\Sigma_S,\Sigma_T)
:=
\frac{k}{2}\Big(\mathrm{tr}(\Sigma_T^{-1}\Sigma_S)-d+\log\frac{\det\Sigma_T}{\det\Sigma_S}\Big).
\label{eq:Cov_def}
\end{equation}

\paragraph{Step 2. Expand the weighted mean term using $\Sigma_T^{-1}$.}
From Lemma~\ref{lem:qT_closed},
\begin{equation}
\Sigma_T^{-1}=\lambda I_d+\frac{\beta_T}{\nu^2}XX^\top.
\label{eq:SigmaT_inv}
\end{equation}
Let $\Delta:=W_T M^\star-\bar W_T$. Then
\begin{align}
\mathrm{tr}(\Delta\Sigma_T^{-1}\Delta^\top)
&=
\mathrm{tr}\!\Big(\Delta\big(\lambda I_d+\tfrac{\beta_T}{\nu^2}XX^\top\big)\Delta^\top\Big)
\nonumber\\
&=
\lambda\|\Delta\|_F^2+\frac{\beta_T}{\nu^2}\mathrm{tr}(\Delta XX^\top\Delta^\top)
=
\lambda\|\Delta\|_F^2+\frac{\beta_T}{\nu^2}\|\Delta X\|_F^2.
\label{eq:weighted_expand}
\end{align}

\paragraph{Step 3. Separate rank-bottleneck residual and posterior fluctuation.}
Write
\[
\Delta = (W_T M^\star-W_T) + (W_T-\bar W_T) = -W_T(I-M^\star) + (W_T-\bar W_T),
\]
and similarly
\[
\Delta X = -W_T(I-M^\star)X + (W_T-\bar W_T)X.
\]
Apply $\|A+B\|_F^2\le 2\|A\|_F^2+2\|B\|_F^2$:
\begin{align}
\|\Delta\|_F^2
&\le
2\|W_T(I-M^\star)\|_F^2 + 2\|W_T-\bar W_T\|_F^2,
\label{eq:Delta_bound_F}\\
\|\Delta X\|_F^2
&\le
2\|W_T(I-M^\star)X\|_F^2 + 2\|(W_T-\bar W_T)X\|_F^2.
\label{eq:Delta_bound_X}
\end{align}
Taking $\mathbb{E}_{W_T\mid D_n}$ and using $W_T\mid D_n\sim\mathcal{MN}(\bar W_T,I_k,\Sigma_T)$ yields
\begin{equation}
\mathbb{E}\|W_T-\bar W_T\|_F^2 = k\,\mathrm{tr}(\Sigma_T),
\qquad
\mathbb{E}\|(W_T-\bar W_T)X\|_F^2 = k\,\mathrm{tr}(X^\top\Sigma_T X).
\label{eq:spread_identities}
\end{equation}

\paragraph{Step 4. Define the approximation and spread terms.}
Define the rank-bottleneck approximation cost
\begin{equation}
\mathrm{Apx}(D_n)
:=
\lambda\,\mathbb{E}_{W_T\mid D_n}\|W_T(I-M^\star)\|_F^2
+
\frac{\beta_T}{\nu^2}\,\mathbb{E}_{W_T\mid D_n}\|W_T(I-M^\star)X\|_F^2,
\label{eq:Apx_def}
\end{equation}
and the posterior spread term
\begin{equation}
\mathrm{Spread}(D_n)
:=
k\Big(\lambda\,\mathrm{tr}(\Sigma_T)+\frac{\beta_T}{\nu^2}\mathrm{tr}(X^\top\Sigma_T X)\Big).
\label{eq:Spread_def}
\end{equation}
Combining \eqref{eq:mn_kl_general}--\eqref{eq:spread_identities} with \eqref{eq:weighted_expand}
and absorbing the (non-optimized) factor-$2$ slack from \eqref{eq:Delta_bound_F}--\eqref{eq:Delta_bound_X}
into constants gives
\begin{equation}
\mathrm{KL}_{\mathrm{alg}}
\le
\mathbb{E}_{D_n}\Big[\mathrm{Apx}(D_n)+\mathrm{Cov}(\Sigma_S,\Sigma_T)+\mathrm{Spread}(D_n)\Big].
\label{eq:alg_term_final}
\end{equation}

\subsubsection{Final compact decomposition of $\mathsf{K}_n$}
Combine \eqref{eq:process_chain}, \eqref{eq:data_term_final}, and \eqref{eq:alg_term_final}:
\begin{align}
\mathsf{K}_n
&\le
\mathbb{E}_{D_n}\Big[
\underbrace{\frac{1}{2\nu^2}\big(\mathrm{Bias}(D_n)+k\,\mathrm{Var}(D_n)\big)}_{\text{Teacher prediction error}}
+
\underbrace{\mathrm{Apx}(D_n)}_{\text{Student capacity / rank bottleneck}}
+
\underbrace{\mathrm{Cov}(\Sigma_S,\Sigma_T)}_{\text{Geometry mismatch}}
+
\underbrace{\mathrm{Spread}(D_n)}_{\text{Posterior spread}}
\Big].
\label{eq:linear_total_bound}
\end{align}
This yields an interpretable checklist: improve teacher bias/variance to tighten dataset shift,
increase effective rank $\kappa$ to reduce $\mathrm{Apx}$,
match $\Sigma_S$ to $\Sigma_T$ to reduce $\mathrm{Cov}$,
and leverage posterior contraction to reduce $\mathrm{Spread}$.

\subsection{Proof of Theorem 3}
\subsubsection{Definitions}
Let $U$ be a random perturbation, independent of all other randomness, uniformly distributed on
the Euclidean ball $\{u:\|u\|\le \rho\}$.

For any dataset $D$ and model $f$, define empirical and population sharpness:
\begin{align}
S_D(f) &:= \bbE_U[L_D(f+U)] - L_D(f),
\label{eq:def_SD}\\
S_P(f) &:= \bbE_U[L_P(f+U)] - L_P(f).
\label{eq:def_SP}
\end{align}
Define the perturbed generalization gap
\begin{equation}
h_U(D,f) := \bbE_U[h(D,f+U)] = \bbE_U[L_P(f+U)-L_D(f+U)].
\label{eq:def_hU}
\end{equation}

\subsubsection{Statement}
\begin{theorem}[Sharpness-aware distillation generalization bound]
\label{thm:sharp}
Assume:
\begin{itemize}
\item (i) (\textit{Local optimality in expectation}) under $\Delta_{\hat{D}_n,f_S}$,
\begin{equation}
\bbE[L_P(f_S)] \le \bbE[\bbE_U[L_P(f_S+U)]].
\label{eq:local_opt}
\end{equation}
\item (ii) Under the teacher process $\Delta_{D_n,f_T}$, both $h_U(D_n,f_T)$ and $S_{D_n}(f_T)$
are sub-Gaussian with proxies $\sigma_u^2$ and $\nu^2$, respectively.
\end{itemize}
Then
\begin{equation}
\mathrm{gen}_S
\le
\mathrm{gen}_T + \bbE_{\Delta_{D_n,f_T}}[S_P(f_T)]
+ (\sigma_u+\nu)\sqrt{2\mathsf{K}_n}.
\label{eq:sharp_goal}
\end{equation}
\end{theorem}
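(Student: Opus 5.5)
The plan is to reduce the student gap to two quantities that are functions of the dataset--model pair $(d,f)$ alone, namely $h_U(d,f)$ and $S_d(f)$. Each of them is then transported from the student process $\Delta_{\hat D_n,f_S}$ to the teacher process $\Delta_{D_n,f_T}$ by the same Donsker--Varadhan argument used for Theorem~\ref{thm:upper}. The teacher-side expectations will recombine so that the empirical sharpness cancels.

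\emph{Step 1 (decomposition on the student side).} Starting from $\mathrm{gen}_S=\bbE[L_P(f_S)]-\bbE[L_{\hat D_n}(f_S)]$, I will apply assumption (i), \eqref{eq:local_opt}, to replace $\bbE[L_P(f_S)]$ by $\bbE[\bbE_U L_P(f_S+U)]$. I will then add and subtract $\bbE[\bbE_U L_{\hat D_n}(f_S+U)]$. Using the definitions \eqref{eq:def_SD} and \eqref{eq:def_hU}, this gives
\begin{equation}
\mathrm{gen}_S\le \bbE_{\Delta_{\hat D_n,f_S}}\big[h_U(\hat D_n,f_S)\big]+\bbE_{\Delta_{\hat D_n,f_S}}\big[S_{\hat D_n}(f_S)\big].
\end{equation}
Here Fubini is used to exchange $\bbE_U$ with the process expectation. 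This is legitimate since $U$ is independent of everything and the loss is nonnegative (and bounded under (A1)).

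\emph{Step 2 (two changes of measure).} For each of the two terms I will repeat the proof of Theorem~\ref{thm:upper} verbatim. I apply Lemma~\ref{lem:dv} with $Q=\Delta_{\hat D_n,f_S}$, $P=\Delta_{D_n,f_T}$ and $g=\lambda h_U$, and bound the log-mgf via Lemma~\ref{lem:subg_mgf} using the sub-Gaussian proxy $\sigma_u^2$ from (ii). Optimizing over $\lambda>0$ gives
\begin{equation}
\bbE_Q[h_U]\le \bbE_P[h_U(D_n,f_T)]+\sigma_u\sqrt{2\mathsf{K}_n}.
\end{equation}
The same argument with $g=\lambda S_d(f)$ and proxy $\nu^2$ gives
\begin{equation}
\bbE_Q[S_{\hat D_n}(f_S)]\le \bbE_P[S_{D_n}(f_T)]+\nu\sqrt{2\mathsf{K}_n}.
\end{equation}
Only the single KL quantity $\mathsf{K}_n$ appears in both, because both test functions live on the same product space $\calZ^n\times\calF$.

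\emph{Step 3 (recombination).} By definition, $h_U(D_n,f_T)=\bbE_U L_P(f_T+U)-\bbE_U L_{D_n}(f_T+U)$. Adding and subtracting $L_P(f_T)$ and $L_{D_n}(f_T)$ gives the identity
\begin{equation}
h_U(D_n,f_T)=h(D_n,f_T)+S_P(f_T)-S_{D_n}(f_T).
\end{equation}
Taking $\bbE_P$ yields $\mathrm{gen}_T+\bbE[S_P(f_T)]-\bbE[S_{D_n}(f_T)]$. Summing with the Step~2 bound for the sharpness term, the two $\bbE[S_{D_n}(f_T)]$ terms cancel exactly, which gives \eqref{eq:sharp_goal}.

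The algebra is routine. I expect the only delicate point to be justifying the DV step in Step~2: the functions $(d,f)\mapsto h_U(d,f)$ and $(d,f)\mapsto S_d(f)$ must be measurable with finite exponential moments under $P$, and absolute continuity $Q\ll P$ must hold. I would handle this by noting that both are averages over $U$ of bounded losses, hence bounded and jointly measurable. Absolute continuity is implicit in $\mathsf{K}_n<\infty$; if $\mathsf{K}_n=\infty$ the bound is trivial. The case $\mathsf{K}_n=0$ is handled by letting $\lambda\to\infty$.
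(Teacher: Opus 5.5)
Your proposal is correct and follows the paper's proof essentially step for step. It uses local optimality together with adding and subtracting $\bbE_U[L_{\hat D_n}(f_S+U)]$, then two Donsker--Varadhan transfers paying $\sigma_u\sqrt{2\mathsf{K}_n}$ and $\nu\sqrt{2\mathsf{K}_n}$ against the single divergence $\mathsf{K}_n$, and finally the identity $h_U=h+S_P-S_{D_n}$ that cancels $\bbE[S_{D_n}(f_T)]$.

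There are two small slips in your side remarks. When $\mathsf{K}_n=0$, the bound $\bbE_Q[h_U]\le\bbE_P[h_U]+\mathsf{K}_n/\lambda+\lambda\sigma_u^2/2$ is closed by letting $\lambda\to 0^+$ (or by noting $Q=P$); $\lambda\to\infty$ is instead the limit that handles a zero proxy. Also, the finite exponential moments needed for the Donsker--Varadhan step already follow from hypothesis (ii), so you do not need to invoke (A1), which is not among the theorem's assumptions.
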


\subsubsection{Proof}
\begin{proof}
By definition,
$\mathrm{gen}_S
=
\bbE_{\Delta_{\hat{D}_n,f_S}}\big[L_P(f_S) - L_{\hat{D}_n}(f_S)\big]$.

Using \eqref{eq:local_opt},
$\mathrm{gen}_S
\le
\bbE_{\Delta_{\hat{D}_n,f_S}}\big[\bbE_U[L_P(f_S+U)] - L_{\hat{D}_n}(f_S)\big]$.

Add and subtract $\bbE_U[L_{\hat{D}_n}(f_S+U)]$ inside the expectation:
\begin{align*}
\mathrm{gen}_S
&\le
\bbE_{\Delta_{\hat{D}_n,f_S}}\big[
\bbE_U(L_P(f_S+U)-L_{\hat{D}_n}(f_S+U))
\big]\\
&\quad
+
\bbE_{\Delta_{\hat{D}_n,f_S}}\big[
\bbE_U(L_{\hat{D}_n}(f_S+U)) - L_{\hat{D}_n}(f_S)
\big].
\end{align*}
Recognize the two terms using \eqref{eq:def_hU} and \eqref{eq:def_SD}:
\begin{equation}
\mathrm{gen}_S
\le
\bbE_{\Delta_{\hat{D}_n,f_S}}\big[h_U(\hat{D}_n,f_S)\big]
+
\bbE_{\Delta_{\hat{D}_n,f_S}}\big[S_{\hat{D}_n}(f_S)\big].
\label{eq:genS_split}
\end{equation}

Let $P=\Delta_{D_n,f_T}$ and $Q=\Delta_{\hat{D}_n,f_S}$.
Apply Lemma~\ref{lem:dv} with $g=\lambda h_U$ for any $\lambda>0$:
\[
\lambda \bbE_Q[h_U]
\le
\mathsf{K}_n + \log \bbE_P[e^{\lambda h_U}].
\]
If $h_U$ is $\sigma_u^2$-sub-Gaussian under $P$, then by Lemma~\ref{lem:subg_mgf},
\[
\log \bbE_P[e^{\lambda h_U}]
\le
\lambda \bbE_P[h_U] + \frac{\lambda^2\sigma_u^2}{2}.
\]
So
\[
\bbE_Q[h_U]
\le
\bbE_P[h_U] + \frac{\mathsf{K}_n}{\lambda} + \frac{\lambda\sigma_u^2}{2}.
\]
Optimizing over $\lambda$ exactly as in Theorem~\ref{thm:upper} yields
\begin{equation}
\bbE_Q[h_U]
\le
\bbE_P[h_U] + \sigma_u\sqrt{2\mathsf{K}_n}.
\label{eq:transfer_hU}
\end{equation}

Apply Lemma~\ref{lem:dv} with $g=\lambda S_D$ (the function $(D,f)\mapsto S_D(f)$)
for any $\lambda>0$:
\[
\lambda \bbE_Q[S]
\le
\mathsf{K}_n + \log \bbE_P[e^{\lambda S}],
\]
where $S$ denotes the random variable $S_D(f)$ under the corresponding process.
If $S_{D_n}(f_T)$ is $\nu^2$-sub-Gaussian under $P$, then similarly
\[
\bbE_Q[S]
\le
\bbE_P[S] + \nu\sqrt{2\mathsf{K}_n}.
\]
That is,
\begin{equation}
\bbE_{\Delta_{\hat{D}_n,f_S}}[S_{\hat{D}_n}(f_S)]
\le
\bbE_{\Delta_{D_n,f_T}}[S_{D_n}(f_T)]
+
\nu\sqrt{2\mathsf{K}_n}.
\label{eq:transfer_S}
\end{equation}

Plug \eqref{eq:transfer_hU} and \eqref{eq:transfer_S} into \eqref{eq:genS_split}:
\begin{align}
\mathrm{gen}_S
&\le
\bbE_{\Delta_{D_n,f_T}}[h_U(D_n,f_T)]
+
\bbE_{\Delta_{D_n,f_T}}[S_{D_n}(f_T)]
+
(\sigma_u+\nu)\sqrt{2\mathsf{K}_n}.
\label{eq:almost_done}
\end{align}

Expand $h_U$ using \eqref{eq:def_hU}:
\[
\bbE[h_U(D_n,f_T)]
=
\bbE\big[\bbE_U[L_P(f_T+U)-L_{D_n}(f_T+U)]\big]
=
\bbE[\bbE_U L_P(f_T+U)] - \bbE[\bbE_U L_{D_n}(f_T+U)].
\]
Now add and subtract the unperturbed terms:
\begin{align*}
\bbE[\bbE_U L_P(f_T+U)]
&=
\bbE[L_P(f_T)] + \bbE[S_P(f_T)],
\\
\bbE[\bbE_U L_{D_n}(f_T+U)]
&=
\bbE[L_{D_n}(f_T)] + \bbE[S_{D_n}(f_T)].
\end{align*}
Therefore,
\[
\bbE[h_U(D_n,f_T)]
=
\bbE[L_P(f_T)-L_{D_n}(f_T)]
+
\bbE[S_P(f_T)]-\bbE[S_{D_n}(f_T)]
=
\mathrm{gen}_T + \bbE[S_P(f_T)] - \bbE[S_{D_n}(f_T)].
\]
Substitute this into \eqref{eq:almost_done}. The $-\bbE[S_{D_n}(f_T)]$ cancels
with the $+\bbE[S_{D_n}(f_T)]$ term, leaving
\[
\mathrm{gen}_S
\le
\mathrm{gen}_T + \bbE_{\Delta_{D_n,f_T}}[S_P(f_T)]
+ (\sigma_u+\nu)\sqrt{2\mathsf{K}_n},
\]
which is \eqref{eq:sharp_goal}.
\end{proof}

\subsection{Derivation of the Convenient Proxies (24)–(27)}

This section proves the simplified proxies stated in the main text:
\[
\sigma_0
= \frac{1}{\sqrt{n}}\Big(\kappa L+\frac{b-a}{2}\Big),\quad
\sigma_u(\rho)
= \frac{1}{\sqrt{n}}\Big(\kappa(g_0+\alpha\rho)+\frac{b-a}{2}\Big),
\]
\[
\nu(\rho)
= \frac{1}{\sqrt{n}}\Big(\frac{\alpha\kappa}{2}\rho + g_0\rho + \frac{\alpha}{2}\rho^2\Big),\quad
\bbE[S_P(f_T)]
\le \frac{1}{2}\tau_{\op}\rho^2.
\]

\subsubsection{Assumptions used}
We use the following conditions, matching the main text:
\begin{itemize}
\item (A1) Bounded loss: $\ell(\cdot;z)\in[a,b]$.
\item (A2) Parameter stability: for neighboring datasets $D,D^{(i)}$,
$\|A_T(D)-A_T(D^{(i)})\|\le \kappa/n$.
\item (A3) Global Lipschitz: for all $z$ and all $f,f'$,
$|\ell(f;z)-\ell(f';z)|\le L\|f-f'\|$.
\item (A4) Local regularity on $B(f_T,2\rho)$: for all $z$, $\ell(\cdot;z)$ is $\alpha$-smooth
and $\|\nabla \ell(f_T;z)\|\le g_0$.
This implies a local Lipschitz scale $g_0+\alpha\rho$ on $B(f_T,\rho)$.
\item (A5) Population curvature: on $B(f_T,\rho)$, $\|\nabla^2 L_P(\cdot)\|_{\op}\le \tau_{\op}$.
\end{itemize}

\subsubsection{Bounded differences for the (unperturbed) teacher gap}
\begin{lemma}[Bounded differences for $h(D_n,f_T)$ under (A1)–(A3)]
\label{lem:bd_gap}
Let $f_T=A_T(D_n)$ and $f_T'=A_T(D_n^{(i)})$ for neighboring datasets.
Then
\begin{equation}
|h(D_n,f_T)-h(D_n^{(i)},f_T')|
\le
\frac{2\kappa L + (b-a)}{n}.
\label{eq:bd_gap}
\end{equation}
Consequently, $h(D_n,f_T)$ is sub-Gaussian with proxy
$\sigma_0=\frac{1}{\sqrt{n}}\big(\kappa L+\frac{b-a}{2}\big)$.
\end{lemma}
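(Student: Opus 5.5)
The plan mirrors the proof of Proposition~\ref{prop:stability}. The only change is that the functional stability constant $\beta$ is replaced by the quantity implied by parameter stability (A2) together with the global Lipschitz property (A3).

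\textbf{Step 1: population term.} Fix neighboring datasets $D_n, D_n^{(i)}$ and set $f_T=A_T(D_n)$, $f_T'=A_T(D_n^{(i)})$. By (A3) and (A2), for every $z$,
\[
|\ell(f_T;z)-\ell(f_T';z)|\le L\|f_T-f_T'\|\le \kappa L/n .
\]
This gives exactly a uniform stability constant $\beta=\kappa L/n$. Integrating over $Z\sim P$ then yields $|L_P(f_T)-L_P(f_T')|\le \kappa L/n$.

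\textbf{Step 2: empirical term.} Split the empirical sums into indices $j\neq i$ and the index $j=i$. For $j\neq i$, each summand differs by at most $\kappa L/n$. For $j=i$, insert $\ell(f_T';z_i)$ and use the triangle inequality; the two pieces are bounded by $\kappa L/n$ and by $(b-a)$ via (A1). This gives
\[
|L_{D_n}(f_T)-L_{D_n^{(i)}}(f_T')|\le \frac{\kappa L}{n}+\frac{b-a}{n}.
\]
Adding the bounds from Steps 1 and 2 gives \eqref{eq:bd_gap}, with $c_i=(2\kappa L+(b-a))/n$.

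\textbf{Step 3: sub-Gaussianity.} The sub-Gaussian conclusion follows from McDiarmid's inequality in its mgf form (Hoeffding's lemma applied to the Doob martingale). This gives the variance proxy
\[
\sigma^2=\tfrac14\sum_i c_i^2=\tfrac{n}{4}\cdot\frac{(2\kappa L+b-a)^2}{n^2}=\frac{1}{n}\Big(\kappa L+\frac{b-a}{2}\Big)^2,
\]
so that $\sigma_0=\frac{1}{\sqrt n}\big(\kappa L+\frac{b-a}{2}\big)$ as claimed. Equivalently, we can simply invoke Proposition~\ref{prop:stability} with $\beta=\kappa L/n$, since
\[
\tfrac n4\big(2\kappa L/n+(b-a)/n\big)^2
\]
is the same expression.

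\textbf{Main obstacle.} The main subtlety is that $f_T$ may be randomized under $\Delta_{f_T|D_n}$, while (A2) is phrased for a deterministic map $A_T$. I would treat $A_T$ as deterministic, or condition on the algorithm's internal randomness, which preserves the bounded-difference property pointwise. The mgf bound then holds conditionally, and averaging over the internal randomness preserves it as long as the conditional mean does not vary. If it does vary, I would note that the lemma is stated for deterministic $A_T$, consistent with the notation $f_T=A_T(D_n)$.
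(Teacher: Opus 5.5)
Your proof is correct and follows the paper's argument essentially step for step: the same split into population and empirical parts, the same $j\neq i$ versus $j=i$ treatment via (A2)--(A3) plus the triangle inequality with (A1), and McDiarmid with variance proxy $\tfrac14\sum_i c_i^2$. Your observation that (A2)+(A3) give uniform stability with $\beta=\kappa L/n$, so the lemma is a special case of the paper's stability proposition, is also valid. Your use of the mgf form of McDiarmid (Hoeffding's lemma on the Doob martingale) is more precise than the paper's appeal to a tail-to-mgf ``equivalence.'' Your caveat about randomized $A_T$ flags a point the paper sidesteps by treating $A_T$ as deterministic.
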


\begin{proof}
Write
\[
h(D,f)=L_P(f)-L_D(f).
\]
Then
\[
|h(D_n,f_T)-h(D_n^{(i)},f_T')|
\le
|L_P(f_T)-L_P(f_T')| + |L_{D_n}(f_T)-L_{D_n^{(i)}}(f_T')|.
\]

By (A3),
\[
|L_P(f_T)-L_P(f_T')|
=
\left|\bbE_{Z\sim P}[\ell(f_T;Z)-\ell(f_T';Z)]\right|
\le
\bbE[L\|f_T-f_T'\|]
=
L\|f_T-f_T'\|.
\]
By (A2), $\|f_T-f_T'\|\le \kappa/n$, hence
\begin{equation}
|L_P(f_T)-L_P(f_T')|\le \frac{\kappa L}{n}.
\label{eq:bd_pop_gap}
\end{equation}

Write
\[
L_{D_n}(f_T)=\frac{1}{n}\sum_{j=1}^n \ell(f_T;z_j),
\quad
L_{D_n^{(i)}}(f_T')=\frac{1}{n}\sum_{j=1}^n \ell(f_T';z_j^{(i)}).
\]
Then
\[
|L_{D_n}(f_T)-L_{D_n^{(i)}}(f_T')|
\le
\frac{1}{n}\sum_{j=1}^n |\ell(f_T;z_j)-\ell(f_T';z_j^{(i)})|.
\]
For $j\neq i$, $z_j^{(i)}=z_j$, and by (A3) and (A2),
\[
|\ell(f_T;z_j)-\ell(f_T';z_j)|
\le
L\|f_T-f_T'\|
\le
\frac{\kappa L}{n}.
\]
For $j=i$, use triangle inequality:
\[
|\ell(f_T;z_i)-\ell(f_T';z_i')|
\le
|\ell(f_T;z_i)-\ell(f_T';z_i)|
+
|\ell(f_T';z_i)-\ell(f_T';z_i')|.
\]
The first term is at most $\kappa L/n$ by (A3)+(A2). The second term is at most $(b-a)$ by (A1).
Thus
\[
|\ell(f_T;z_i)-\ell(f_T';z_i')|
\le
\frac{\kappa L}{n} + (b-a).
\]
Combine:
\[
|L_{D_n}(f_T)-L_{D_n^{(i)}}(f_T')|
\le
\frac{(n-1)\cdot(\kappa L/n) + (\kappa L/n + (b-a))}{n}
=
\frac{\kappa L}{n} + \frac{b-a}{n}.
\]
So
\begin{equation}
|L_{D_n}(f_T)-L_{D_n^{(i)}}(f_T')|
\le
\frac{\kappa L + (b-a)}{n}.
\label{eq:bd_emp_gap}
\end{equation}

Step 4 (finish the bounded difference).
Combine \eqref{eq:bd_pop_gap} and \eqref{eq:bd_emp_gap}:
\[
|h(D_n,f_T)-h(D_n^{(i)},f_T')|
\le
\frac{\kappa L}{n} + \frac{\kappa L + (b-a)}{n}
=
\frac{2\kappa L + (b-a)}{n}.
\]
This is \eqref{eq:bd_gap}.

Step 5 (convert bounded differences into the proxy $\sigma_0$).
With $c_i=(2\kappa L+(b-a))/n$ for all $i$, McDiarmid implies sub-Gaussian proxy
$\sigma^2=\frac{1}{4}\sum_{i=1}^n c_i^2=\frac{n}{4}\cdot\frac{(2\kappa L+(b-a))^2}{n^2}$, hence
\[
\sigma_0=\sqrt{\sigma^2}=\frac{1}{\sqrt{n}}\left(\kappa L+\frac{b-a}{2}\right).
\]
\end{proof}

\subsubsection{Bounded differences for the perturbed gap $h_U$}
\begin{lemma}[Proxy for $\sigma_u(\rho)$ under (A1), (A2), (A4)]
\label{lem:sigmau}
Under (A1), (A2), (A4), the perturbed gap $h_U(D_n,f_T)$ is sub-Gaussian with proxy
\[
\sigma_u(\rho)=\frac{1}{\sqrt{n}}\Big(\kappa(g_0+\alpha\rho)+\frac{b-a}{2}\Big).
\]
\end{lemma}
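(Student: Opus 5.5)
I will mirror the proof of Lemma~\ref{lem:bd_gap}: establish a bounded-differences property for the map $D_n\mapsto h_U(D_n,A_T(D_n))$ and then convert it into a sub-Gaussian proxy via McDiarmid's inequality, exactly as in Step 5 there. The only change is that the global Lipschitz constant $L$ from (A3) is replaced by the local Lipschitz scale $g_0+\alpha\rho$ furnished by (A4).

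\textbf{Step 1: decompose the difference.} Fix neighboring datasets $D_n,D_n^{(i)}$ and set $f_T=A_T(D_n)$, $f_T'=A_T(D_n^{(i)})$. Since $U$ is independent of the data, I write
\[
h_U(D_n,f_T)-h_U(D_n^{(i)},f_T')=\bbE_U\big[L_P(f_T+U)-L_P(f_T'+U)\big]-\bbE_U\big[L_{D_n}(f_T+U)-L_{D_n^{(i)}}(f_T'+U)\big].
\]
I then bound the two brackets pointwise in $u$ with $\|u\|\le\rho$, and average over $U$ at the end.

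\textbf{Step 2: the local Lipschitz estimate.} For $\|u\|\le\rho$, both $f_T+u$ and $f_T'+u$ lie in $B(f_T,\rho+\kappa/n)$, which is contained in $B(f_T,2\rho)$ once $\kappa/n\le\rho$. On this ball, (A4) gives
\[
\|\nabla\ell(f;z)\|\le g_0+\alpha\|f-f_T\|.
\]
Integrating the gradient along the segment from $f_T'+u$ to $f_T+u$, and using (A2), I get
\[
|\ell(f_T+u;z)-\ell(f_T'+u;z)|\le (g_0+\alpha\rho')\,\tfrac{\kappa}{n},
\]
where $\rho'=\rho+\kappa/n$ is the largest distance from $f_T$ reached on the segment.

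\textbf{Main obstacle.} The target constant contains $g_0+\alpha\rho$ rather than $g_0+\alpha(\rho+\kappa/n)$. Two routes are available:
\begin{itemize}
\item[(a)] Absorb the $O(1/n)$ excess into a lower-order term.
\item[(b)] Symmetrize, so that the segment's radius is controlled by $\rho$. Evaluate the gradient at the midpoint of the segment and use $\alpha$-smoothness to control the second-order remainder, or read (A4) as stating that $g_0+\alpha\rho$ is the Lipschitz scale on the perturbed neighborhood relevant to both $f_T+U$ and $f_T'+U$.
\end{itemize}
I would first try to state the local Lipschitz scale on $B(f_T,\rho)$ as (A4) asserts, and treat $f_T'+u$ by the same assertion applied at $f_T'$, since $f_T'$ is also a teacher output. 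If that is not acceptable, I would record the proxy with $\rho$ replaced by $\rho+\kappa/n$ and note that this is the same to leading order.

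\textbf{Step 3: assemble the bounded differences.}
\begin{itemize}
\item Population term: taking expectation over $Z\sim P$ gives a bound of $(g_0+\alpha\rho)\kappa/n$.
\item Empirical term, indices $j\ne i$: each contributes at most $(g_0+\alpha\rho)\kappa/n$.
\item Empirical term, index $j=i$: the triangle inequality gives at most $(g_0+\alpha\rho)\kappa/n+(b-a)$, using (A1) for the swap $z_i\to z_i'$ at the fixed model $f_T'+u$.
\end{itemize}
Dividing the empirical sum by $n$ bounds it by $\big((g_0+\alpha\rho)\kappa+(b-a)\big)/n$. Averaging over $U$ preserves both bounds, so
\[
c_i=\frac{2\kappa(g_0+\alpha\rho)+(b-a)}{n}.
\]
McDiarmid's inequality then yields the sub-Gaussian proxy $\sigma^2=\frac14\sum_i c_i^2$, whose square root is
\[
\sigma_u(\rho)=\frac{1}{\sqrt n}\Big(\kappa(g_0+\alpha\rho)+\frac{b-a}{2}\Big),
\]
as claimed.
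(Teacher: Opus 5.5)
Your proposal follows the paper's proof. The paper likewise reruns Lemma~\ref{lem:bd_gap} pointwise in $u$ with $L$ replaced by $g_0+\alpha\rho$, notes that $c_i=\big(2\kappa(g_0+\alpha\rho)+(b-a)\big)/n$ is uniform in $u$ and therefore survives averaging over $U$, and then applies McDiarmid.

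The $\kappa/n$ overshoot you flag is genuine: $f_T'+u$ can lie outside $B(f_T,\rho)$. The paper never addresses it and implicitly adopts your reading (b). The stated constant is made fully rigorous by your fallback with $\rho+\kappa/n$, or by reading (A4) as bounding gradients by $g_0$ on all of $B(f_T,2\rho)$ together with $\kappa/n\le\rho$. The midpoint trick alone does not remove the excess, which remains of order $\alpha\kappa^2/n^2$.
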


\begin{proof}
The proof repeats Lemma~\ref{lem:bd_gap}, replacing the global Lipschitz constant $L$
by the local Lipschitz scale $(g_0+\alpha\rho)$ valid on the perturbation region.

Step 1 (local Lipschitz on the ball).
By (A4) and smoothness, for any $z$ and any $f$ with $\|f-f_T\|\le \rho$,
\[
\|\nabla \ell(f;z)\|
\le
\|\nabla \ell(f_T;z)\| + \alpha\|f-f_T\|
\le
g_0+\alpha\rho.
\]
Thus for any such $f,f'$ in the region,
\[
|\ell(f;z)-\ell(f';z)|\le (g_0+\alpha\rho)\|f-f'\|.
\]

Step 2 (apply the bounded difference argument pointwise in $u$, then average).
For each fixed $u$ with $\|u\|\le \rho$, apply Lemma~\ref{lem:bd_gap} to the gap
$h(D,f_T+u)$ with Lipschitz constant $(g_0+\alpha\rho)$.
This gives bounded-difference constants
\[
c_i(u)=\frac{2\kappa(g_0+\alpha\rho)+(b-a)}{n}.
\]
Since this bound is uniform in $u$ and $h_U=\bbE_U[h(D,f_T+U)]$ is an average over $U$,
the same constants apply to $h_U$.
Therefore $h_U$ is sub-Gaussian with
\[
\sigma_u(\rho)=\frac{1}{\sqrt{n}}\left(\kappa(g_0+\alpha\rho)+\frac{b-a}{2}\right).
\]
\end{proof}

\subsubsection{Bounded differences for empirical sharpness and the proxy $\nu(\rho)$}
\begin{lemma}[Proxy for $\nu(\rho)$ under (A2), (A4)]
\label{lem:nu}
Under (A2) and (A4), assume moreover that for any neighboring datasets
$D,D^{(i)}$, if we set
\[
f:=A_T(D), \qquad f':=A_T(D^{(i)}),
\]
then the $\rho$-neighborhood of the line segment joining $f$ and $f'$
lies inside the local region on which the Hessian bound in (A4) is valid.
Then the empirical sharpness
\[
S_{D_n}(f_T)=\bbE_U\!\left[L_{D_n}(f_T+U)\right]-L_{D_n}(f_T)
\]
satisfies bounded differences with constants
\[
c_i^{(S)}=\frac{\alpha\kappa\rho + 2g_0\rho + \alpha\rho^2}{n},
\qquad i\in[n],
\]
and therefore is sub-Gaussian with proxy
\[
\nu(\rho)=\frac{\sqrt{n}}{2}\,c_i^{(S)}
=
\frac{1}{\sqrt{n}}
\left(
\frac{\alpha\kappa}{2}\rho + g_0\rho + \frac{\alpha}{2}\rho^2
\right).
\]
\end{lemma}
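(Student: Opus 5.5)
We prove the bounded-difference property for $S_{D}(A_T(D))$ and then invoke McDiarmid's inequality, exactly as in Lemma~\ref{lem:bd_gap}. Fix neighboring datasets $D,D^{(i)}$ and write $f=A_T(D)$, $f'=A_T(D^{(i)})$, so $\|f-f'\|\le\kappa/n$ by (A2). The starting point is the per-sample sharpness
\[
s(g;z):=\bbE_U[\ell(g+U;z)]-\ell(g;z),\qquad S_D(g)=\frac1n\sum_{j=1}^n s(g;z_j).
\]
We split
\[
S_D(f)-S_{D^{(i)}}(f')=\frac1n\sum_{j\neq i}\big(s(f;z_j)-s(f';z_j)\big)+\frac1n\big(s(f;z_i)-s(f';z_i')\big),
\]
so it suffices to bound the two kinds of terms separately.

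\textbf{Replaced sample $j=i$.} Here we bound each term by its absolute size, not by a difference. For any $g$ in the local region and any $z$, a second-order Taylor expansion, using $\alpha$-smoothness from (A4) and $\bbE[U]=0$ by symmetry of the uniform ball, gives
\[
|s(g;z)|\le \|\nabla\ell(g;z)\|\,\|\bbE U\|+\frac{\alpha}{2}\bbE\|U\|^2\le\frac{\alpha}{2}\rho^2.
\]
A cruder first-order bound $|s(g;z)|\le(g_0+\alpha\rho)\rho$ is also available. Combining the two, $|s(f;z_i)|+|s(f';z_i')|$ is at most $2g_0\rho+\alpha\rho^2$, which matches the $2g_0\rho+\alpha\rho^2$ part of $n c_i^{(S)}$.

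\textbf{Unchanged samples $j\neq i$.} Here $z_j$ is fixed and we study $g\mapsto s(g;z)$. Its gradient is $\bbE_U[\nabla\ell(g+U;z)]-\nabla\ell(g;z)$, which is bounded in norm by $\alpha\,\bbE\|U\|\le\alpha\rho$ using $\alpha$-smoothness along the segment. This is where the extra hypothesis is needed: the $\rho$-neighborhood of the segment $[f,f']$ must lie in the smooth region. The mean value theorem then gives
\[
|s(f;z_j)-s(f';z_j)|\le\alpha\rho\cdot\frac{\kappa}{n}.
\]
Summing over $j\neq i$ and dividing by $n$ yields at most $\alpha\kappa\rho/n$, which supplies the $\alpha\kappa\rho$ term of $c_i^{(S)}$.

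\textbf{Conclusion and main obstacle.} Adding the two contributions gives $c_i^{(S)}=(\alpha\kappa\rho+2g_0\rho+\alpha\rho^2)/n$. McDiarmid then gives proxy $\nu^2=\frac14\sum_i (c_i^{(S)})^2$, hence $\nu=\frac{\sqrt n}{2}c_i^{(S)}$, which is the stated formula. The main obstacle is the $j=i$ term: matching the stated constants exactly requires careful accounting of which of the Taylor bound and the first-order bound is used, and the $1/n$ factors must be tracked so the result lands on the claimed $c_i^{(S)}$. If the constants do not align exactly, we use the looser sum of bounds, which still dominates the required quantity.
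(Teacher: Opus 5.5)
Your argument is correct and essentially the paper's: bounded differences from a replaced-sample term controlled by a second-order Taylor expansion, plus a parameter-change term using that $g\mapsto s(g;z)$ is $\alpha\rho$-Lipschitz along the segment (with the Hessian bound at $f'$ supplied by the lemma's segment hypothesis), then McDiarmid. Your two departures (splitting by sample index instead of passing through $S_{D^{(i)}}(f)$, and using $\bbE[U]=0$ to get $|s(g;z)|\le\frac{\alpha}{2}\rho^2$) give the sharper constant $\big(\tfrac{n-1}{n}\alpha\kappa\rho+\alpha\rho^2\big)/n\le c_i^{(S)}$, so drop the ``combining the two'' step and the closing hedge: mixing in $(g_0+\alpha\rho)\rho$ would not certify the stated constant, and no looser bound could, whereas the Taylor bound applied to both $|s(f;z_i)|$ and $|s(f';z_i')|$ gives $\alpha\rho^2\le 2g_0\rho+\alpha\rho^2$ directly.
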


\begin{proof}
For each sample $z$, define
\[
\phi_z(f):=\bbE_U\big[\ell(f+U;z)-\ell(f;z)\big].
\]
Then
\[
S_D(f)=\frac{1}{n}\sum_{j=1}^n \phi_{z_j}(f).
\]

Let $D,D^{(i)}$ be neighboring datasets and set
\[
f:=A_T(D), \qquad f':=A_T(D^{(i)}).
\]
We bound
\[
|S_D(f)-S_{D^{(i)}}(f')|
\le
|S_D(f)-S_{D^{(i)}}(f)| + |S_{D^{(i)}}(f)-S_{D^{(i)}}(f')|.
\]

\paragraph{Step 1: dataset replacement at fixed model.}
Since only the $i$-th sample changes,
\[
|S_D(f)-S_{D^{(i)}}(f)|
=
\frac{1}{n}\,|\phi_{z_i}(f)-\phi_{z_i'}(f)|
\le
\frac{2}{n}\sup_{z} |\phi_z(f)|.
\]
We now bound $\phi_z(f)$. For any $u$ with $\|u\|\le \rho$, Taylor's theorem
with integral remainder gives
\[
\ell(f+u;z)-\ell(f;z)
=
\langle \nabla_f \ell(f;z),u\rangle
+
\int_0^1 (1-t)\,u^\top \nabla_f^2 \ell(f+t u;z)\,u\,dt.
\]
By (A4),
\[
\|\nabla_f \ell(f;z)\|\le g_0,
\qquad
\sup_{\|v\|\le 2\rho}\|\nabla_f^2 \ell(f+v;z)\|_{\op}\le \alpha,
\]
hence
\[
|\ell(f+u;z)-\ell(f;z)|
\le
g_0\|u\| + \frac{\alpha}{2}\|u\|^2
\le
g_0\rho + \frac{\alpha}{2}\rho^2.
\]
Taking expectation over $U$ yields
\[
|\phi_z(f)|
\le
g_0\rho + \frac{\alpha}{2}\rho^2.
\]
Therefore
\begin{equation}
|S_D(f)-S_{D^{(i)}}(f)|
\le
\frac{2g_0\rho+\alpha\rho^2}{n}.
\label{eq:sharp_data_term_correct}
\end{equation}

\paragraph{Step 2: parameter change at fixed dataset.}
For each fixed $z$, the map $\phi_z(\cdot)$ is locally Lipschitz. Indeed,
for any $\theta$ in the segment joining $f$ and $f'$,
\[
\nabla \phi_z(\theta)
=
\bbE_U\big[\nabla_f \ell(\theta+U;z)-\nabla_f \ell(\theta;z)\big].
\]
By the Hessian bound in (A4),
\[
\|\nabla \phi_z(\theta)\|
\le
\bbE_U\big[\alpha \|U\|\big]
\le
\alpha \rho.
\]
Hence $\phi_z$ is $(\alpha\rho)$-Lipschitz on this segment:
\[
|\phi_z(f)-\phi_z(f')|
\le
\alpha\rho \,\|f-f'\|.
\]
Averaging over the samples in $D^{(i)}$ gives
\[
|S_{D^{(i)}}(f)-S_{D^{(i)}}(f')|
\le
\alpha\rho\,\|f-f'\|.
\]
Using (A2),
\[
\|f-f'\|
=
\|A_T(D)-A_T(D^{(i)})\|
\le
\frac{\kappa}{n},
\]
so
\begin{equation}
|S_{D^{(i)}}(f)-S_{D^{(i)}}(f')|
\le
\frac{\alpha\kappa\rho}{n}.
\label{eq:sharp_param_term_correct}
\end{equation}

\paragraph{Step 3: combine and apply McDiarmid.}
Combining \eqref{eq:sharp_data_term_correct} and
\eqref{eq:sharp_param_term_correct}, we obtain
\[
|S_D(f)-S_{D^{(i)}}(f')|
\le
\frac{\alpha\kappa\rho + 2g_0\rho + \alpha\rho^2}{n}
=: c_i^{(S)}.
\]
Thus $S_{D_n}(f_T)$ satisfies bounded differences with constants
$c_i^{(S)}$, and McDiarmid's inequality implies that
$S_{D_n}(f_T)-\bbE[S_{D_n}(f_T)]$ is sub-Gaussian with variance proxy
\[
\frac{1}{4}\sum_{i=1}^n (c_i^{(S)})^2
=
\frac{n}{4}\left(\frac{\alpha\kappa\rho + 2g_0\rho + \alpha\rho^2}{n}\right)^2.
\]
Equivalently, it is sub-Gaussian with proxy
\[
\nu(\rho)
=
\frac{\sqrt{n}}{2}\,c_i^{(S)}
=
\frac{1}{\sqrt{n}}
\left(
\frac{\alpha\kappa}{2}\rho + g_0\rho + \frac{\alpha}{2}\rho^2
\right).
\]
\end{proof}

\subsubsection{Bounding the population sharpness by curvature}
\begin{lemma}[Population sharpness bound under (A5$'$)]
\label{lem:SP_bound}
Assume
\[
\sup_{\|v\|\le \rho}\|\nabla^2 L_P(f_T+v)\|_{\op}\le \tau_{\op}.
\]
Then
\begin{equation}
\bbE[S_P(f_T)] \le \frac{1}{2}\tau_{\op}\rho^2.
\label{eq:SP_bound}
\end{equation}
\end{lemma}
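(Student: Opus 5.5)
We bound $S_P(f_T)$ pointwise for each realization of $(D_n,f_T)$ and then take the expectation over the teacher process $\Delta_{D_n,f_T}$. Fix $f_T$ and a perturbation $u$ with $\|u\|\le\rho$. The segment $\{f_T+tu: t\in[0,1]\}$ lies in the ball $B(f_T,\rho)$, on which the Hessian bound $\|\nabla^2 L_P\|_{\op}\le\tau_{\op}$ holds. Taylor's theorem with integral remainder, as in Step 1 of Lemma~\ref{lem:nu} but now applied to $L_P$, gives
\begin{equation}
L_P(f_T+u)-L_P(f_T)=\langle\nabla L_P(f_T),u\rangle+\int_0^1(1-t)\,u^\top\nabla^2L_P(f_T+tu)\,u\,dt .
\end{equation}
The remainder is at most $\tfrac12\tau_{\op}\|u\|^2\le\tfrac12\tau_{\op}\rho^2$.

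Next, average over $U$. Since $U$ is uniform on the centered ball, its law is symmetric, so $\E_U[U]=0$. It is independent of $f_T$, so the linear term $\E_U\langle\nabla L_P(f_T),U\rangle$ vanishes. This requires $\nabla L_P(f_T)$ to exist and be finite, which follows from the smoothness already assumed on $B(f_T,\rho)$. It does not require $f_T$ to be a stationary point, and this cancellation is the key point: it removes any first-order dependence on the gradient. We are left with
\begin{equation}
S_P(f_T)\le\tfrac12\tau_{\op}\,\E_U\|U\|^2\le\tfrac12\tau_{\op}\rho^2 .
\end{equation}
For a uniform ball in dimension $p$ one even has $\E\|U\|^2=\tfrac{p}{p+2}\rho^2$, but the crude bound suffices. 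Finally, take $\E_{\Delta_{D_n,f_T}}$ of this deterministic bound to obtain \eqref{eq:SP_bound}.

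The only delicate step is justifying that the linear term vanishes. This needs exchanging $\E_U$ with the inner product, which is fine because the gradient is fixed given $f_T$ and $U$ is bounded. It also needs the Hessian bound to hold along every segment, which holds because the ball is convex. If $L_P$ were only assumed twice differentiable a.e., I would instead use the descent-lemma form $L_P(f_T+u)\le L_P(f_T)+\langle\nabla L_P(f_T),u\rangle+\tfrac{\tau_{\op}}2\|u\|^2$, which follows from $\tau_{\op}$-Lipschitzness of $\nabla L_P$ on the ball, and then apply the same symmetry argument.
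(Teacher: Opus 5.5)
Your proof is correct and takes essentially the same route as the paper. Both apply Taylor's theorem with integral remainder to $L_P$ along $f_T+tu$, cancel the linear term via $\E_U[U]=0$, and bound the remainder by $\int_0^1(1-t)\,\tau_{\op}\rho^2\,dt=\tfrac12\tau_{\op}\rho^2$. Your extra remarks are valid refinements the paper leaves implicit: the outer expectation over the teacher process, and the sharper constant $\E\|U\|^2=\tfrac{p}{p+2}\rho^2$.
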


\begin{proof}
Recall
\[
S_P(f_T)=\bbE_U\big[L_P(f_T+U)-L_P(f_T)\big].
\]
By Taylor's theorem with integral remainder,
\[
L_P(f_T+u)-L_P(f_T)
=
\langle \nabla L_P(f_T),u\rangle
+
\int_0^1 (1-t)\,u^\top \nabla^2 L_P(f_T+t u)\,u\,dt.
\]
Taking expectation over $U$ and using $\bbE[U]=0$, we get
\[
\bbE[S_P(f_T)]
=
\bbE_U\!\left[
\int_0^1 (1-t)\,U^\top \nabla^2 L_P(f_T+t U)\,U\,dt
\right].
\]
Since $\|U\|\le \rho$ almost surely and
$\|\nabla^2 L_P(f_T+tU)\|_{\op}\le \tau_{\op}$ for all $t\in[0,1]$,
\[
U^\top \nabla^2 L_P(f_T+tU)\,U
\le
\|\nabla^2 L_P(f_T+tU)\|_{\op}\,\|U\|^2
\le
\tau_{\op}\rho^2.
\]
Therefore
\[
\bbE[S_P(f_T)]
\le
\int_0^1 (1-t)\,\tau_{\op}\rho^2\,dt
=
\frac{1}{2}\tau_{\op}\rho^2.
\]
\end{proof}

\subsection{Proof of Corollary 1 (Tightening Interval and Cutoff Radius $\rho_0$)}

\subsubsection{Baseline and sharpness-aware bounds}
Define
\[
B_{\mathrm{std}} := \mathrm{gen}_T + \sigma_0\sqrt{2\mathsf{K}_n},
\qquad
B_{\mathrm{sh}}(\rho)
:=
\mathrm{gen}_T + \bbE[S_P(f_T)] + (\sigma_u(\rho)+\nu(\rho))\sqrt{2\mathsf{K}_n}.
\]
We seek conditions under which
\[
B_{\mathrm{sh}}(\rho)<B_{\mathrm{std}}.
\]

\subsubsection{Proof}
\begin{proof}
A sufficient condition for $B_{\mathrm{sh}}(\rho)<B_{\mathrm{std}}$ is
\[
\bbE[S_P(f_T)]
<
\big(\sigma_0-\sigma_u(\rho)-\nu(\rho)\big)\sqrt{2\mathsf{K}_n}.
\]

By Lemma~\ref{lem:SP_bound},
\[
\bbE[S_P(f_T)]\le \frac{1}{2}\tau_{\op}\rho^2.
\]
Moreover, using the standard proxy
\[
\sigma_0=\frac{1}{\sqrt{n}}
\left(\kappa L+\frac{b-a}{2}\right),
\]
the local proxy
\[
\sigma_u(\rho)\le
\frac{1}{\sqrt{n}}
\left(\kappa(g_0+\alpha\rho)+\frac{b-a}{2}\right),
\]
and Lemma~\ref{lem:nu},
\[
\nu(\rho)=
\frac{1}{\sqrt{n}}
\left(
\frac{\alpha\kappa}{2}\rho + g_0\rho + \frac{\alpha}{2}\rho^2
\right),
\]
we obtain the lower bound
\[
\sigma_0-\sigma_u(\rho)-\nu(\rho)
\ge
\frac{1}{\sqrt{n}}
\big(A_0-A_1\rho-A_2\rho^2\big),
\]
where
\[
A_0:=\kappa(L-g_0),
\qquad
A_1:=\frac{3}{2}\alpha\kappa+g_0,
\qquad
A_2:=\frac{\alpha}{2}.
\]

Therefore, a sufficient condition for strict improvement is
\begin{equation}
\frac{1}{2}\tau_{\op}\rho^2
<
\frac{1}{\sqrt{n}}
\big(A_0-A_1\rho-A_2\rho^2\big)\sqrt{2\mathsf{K}_n}.
\label{eq:suff_improve}
\end{equation}

Rearranging \eqref{eq:suff_improve} gives
\[
\big(A_2\sqrt{2\mathsf{K}_n}+\tfrac{1}{2}\tau_{\op}\sqrt{n}\big)\rho^2
+
\big(A_1\sqrt{2\mathsf{K}_n}\big)\rho
-
\big(A_0\sqrt{2\mathsf{K}_n}\big)
<
0.
\]
Define
\[
c_0:=A_0\sqrt{2\mathsf{K}_n},
\qquad
c_1:=A_1\sqrt{2\mathsf{K}_n},
\qquad
c_2:=A_2\sqrt{2\mathsf{K}_n}+\frac{1}{2}\tau_{\op}\sqrt{n}.
\]
Then the above condition is
\begin{equation}
c_2\rho^2+c_1\rho-c_0<0.
\label{eq:quad_ineq}
\end{equation}

If $\Delta:=L-g_0>0$, then $A_0=\kappa\Delta>0$, hence $c_0>0$ and the
quadratic in \eqref{eq:quad_ineq} has exactly one positive root,
\[
\rho_0
=
\frac{-c_1+\sqrt{c_1^2+4c_2c_0}}{2c_2}.
\]
Since $c_2>0$, the inequality \eqref{eq:quad_ineq} holds precisely for
\[
0<\rho<\rho_0.
\]
Consequently, for every $0<\rho<\rho_0$, the sufficient condition
\eqref{eq:suff_improve} is satisfied, and therefore
\[
B_{\mathrm{sh}}(\rho)<B_{\mathrm{std}}.
\]
\end{proof}

\end{document}